\newtheorem{theorem}{Theorem}
\newtheorem{lemma}{Lemma}
\newtheorem{corollary}{Corollary}
\newtheorem{definition}{Definition}
\newtheorem{example}{Example}
\DeclareMathOperator*{\argmax}{arg\,max}
\newcommand{\gn}[1]{\textcolor[rgb]{1.0,0.00,0.00}{#1}}
\def\BibTeX{{\rm B\kern-.05em{\sc i\kern-.025em b}\kern-.08em
    T\kern-.1667em\lower.7ex\hbox{E}\kern-.125emX}}
\newcommand{\mypath}{\lambda}
\newcommand{\pv}{\pi} 
\newcommand{\tu}{\delta} 
\begin{document}

\title{Bounding the Response Time of DAG Tasks \\ Using Long Paths}

\author{Qingqiang He\textsuperscript{1},
Nan Guan\textsuperscript{2},
Mingsong Lv\textsuperscript{1,3},
Xu Jiang\textsuperscript{3},
Wanli Chang\textsuperscript{4}
\\
\\
\textsuperscript{1}The Hong Kong Polytechnic University, Hong Kong SAR\\
\textsuperscript{2}City University of Hong Kong, Hong Kong SAR\\
\textsuperscript{3}Northeastern University, China \\
\textsuperscript{4}Hunan University / Huawei Technologies, China
\thanks{Corresponding author: Nan Guan, Email: nanguan@cityu.edu.hk.}
\thanks{This paper has passed an Artifact Evaluation process.}
\thanks{
This paper was accepted to 43rd IEEE Real-Time Systems Symposium (RTSS 2022).
}
}

\maketitle

\begin{abstract}
In 1969, Graham developed a well-known response time bound for a DAG task using the total workload and the longest path of the DAG, which has been widely applied to solve many scheduling and analysis problems of DAG-based task systems.
This paper presents a new response time bound for a DAG task using the total workload and the lengths of
\emph{multiple long paths} of the DAG, instead of the \emph{longest} path in Graham's bound. Our new bound theoretically dominates and empirically outperforms Graham's bound.
We further extend the proposed approach to multi-DAG task systems.
Our schedulability test theoretically dominates federated scheduling and outperforms the state-of-the-art by a considerable margin.
\end{abstract}


\section{Introduction}
\label{sec:introduction}
This paper studies the response time bounds of DAG (directed acyclic graph) tasks \cite{tang2022real, li2022parallel} under \emph{work-conserving} scheduling (eligible vertices must be executed if there are available cores) on a computing platform with multiple identical cores.
Graham developed a well-known response time bound for a DAG task, using the total workload and the longest path of the DAG \cite{graham1969bounds}.
This is a general result as it applies to any work-conserving scheduling algorithm. Graham's bound serves as the foundation of a large body of work on real-time scheduling and analysis of parallel workload modeled as DAG tasks \cite{melani2015response, li2013outstanding, li2014analysis, jiang2017semi, jiang2020real, baruah2015federatedconditional, ueter2018reservation, wu2021improving, jiang2022unified, wang2022scheduling}.

Intuitively, Graham's bound is derived by constructing an artificial scenario where vertices not in the longest path do \emph{not} execute in parallel with (and thus assumed to interfere with) the execution of the longest path. However, in real execution, many vertices not in the longest path actually can execute in parallel with (and thus do not interfere with) the longest path, so Graham's bound is rather pessimistic in most cases.

In this paper, we develop a more precise response time bound for a DAG task, using the total workload and the lengths of multiple relatively long paths of the DAG, instead of the single longest path in Graham's bound.
The high-level idea is that, using the information of multiple long paths, we can more precisely identify the workload that has to execute in parallel and thus cannot interfere with each other.
It turns out the analysis technique used by Graham's bound based on the abstraction of \emph{critical path} is not enough to
realize the above idea, and we develop new abstractions (e.g., \emph{virtual path} and \emph{restricted critical path})
and new analysis techniques to derive the new response time bound.

Our bound theoretically dominates and empirically outperforms Graham's bound. Evaluation with synthetic workload under various settings shows that our new bound improves Graham's bound largely and reduces the number of cores required by a DAG task significantly.
We also extend our new techniques to the scheduling and analysis of systems consisting of multiple DAG tasks. Our new approach theoretically dominates federated scheduling \cite{li2014analysis} and offers significantly better schedulability than the state-of-the-art as shown in the empirical evaluation.

The rest of this paper is organized as follows.
Section \ref{sec:related_work} reviews related works.
Section \ref{sec:preliminary} defines the DAG task model, the scheduling algorithm and describes Graham's bound that motivates this work.
Section \ref{sec:analysis} presents our new response time bound assuming that a generalized path list is given.
Section \ref{sec:path_sequence} discusses how to compute this generalized path list for a DAG task.
Section \ref{sec:extension} extends our result to the scheduling of multiple DAG tasks.
The evaluation results are reported in Section \ref{sec:evaluation} and Section \ref{sec:conclusion} concludes the paper.

\section{Related Work}
\label{sec:related_work}
Graham developed a well-known response time bound \cite{graham1969bounds}, using the total workload and the length of the longest path for a DAG task.
Graham's bound is based on the work-conserving property: all cores are busy when the critical path (which is the longest path in the worst case) is not executing.

For scheduling one DAG task,
\cite{he2019intra, zhao2020dag, he2021response, zhao2022dag} improved Graham's bound by enforcing certain priority orders among the vertices, so their results are not general to all work-conserving scheduling algorithms.
\cite{voudouris2017timing, chen2019timing} developed scheduling algorithms based on statically assigned vertex execution order, which are no longer work-conserving.
Some work extended Graham's bound to uniform \cite{jiang2017semi}, heterogeneous \cite{han2019response, lin2022type} and unrelated \cite{voudouris2021bounding} multi-core platforms.
Graham's bound has also been extended to other task models, including conditional DAG \cite{melani2015response}, and graph-based model of OpenMP workload \cite{serrano2015timing, wang2017benchmarking, sun2017real, sun2019calculating, sun2021calculating}.
None of these works improves Graham's bound under the same setting (i.e., scheduling a DAG task on a homogeneous multi-core platform by any work-conserving scheduling algorithm) as the original work \cite{graham1969bounds}.
To our best knowledge, this paper is the first work to do so.

For scheduling multiple DAG tasks,
Graham's bound is widely used and many techniques are developed on top of it.
In federated scheduling \cite{li2014analysis}, where each DAG is scheduled independently on a  set of dedicated cores,
Graham's bound is directly applied to the analysis of each individual task.
Later, federated scheduling was generalized to constrained deadline tasks \cite{baruah2015federatedconstrained}, arbitrary deadline tasks \cite{baruah2015federatedarbitrary}, and conditional DAG tasks \cite{baruah2015federatedconditional}.
To address the resource-wasting problem in federated scheduling, a series of federated-based scheduling algorithms \cite{jiang2017semi, ueter2018reservation, jiang2021virtually} were proposed. All these federated scheduling approaches use Graham's bound to compute the number of cores allocated to a DAG task.
In global scheduling,
\cite{melani2015response, fonseca2017improved, fonseca2019schedulability} developed response time analysis techniques for scheduling DAG tasks under Global EDF or Global RM, where Graham's bound is used for the analysis of intra-task interference.
\cite{li2013outstanding, bonifaci2013feasibility, baruah2014improved, jiang2019utilization} proposed schedulability tests for Global EDF or Global RM, which borrow the idea behind the derivation of Graham's bound, although the bound itself is not directly used.
Our work, as a direct improvement of Graham's bound, can potentially be integrated into the above approaches to improve the schedulability of multiple DAG tasks.


\section{Preliminary}
\label{sec:preliminary}
\subsection{Task Model}
\label{sec:task}
A parallel real-time task is modeled as a DAG $G = (V, E)$, where $V$ is the set of vertices and $E\subseteq  V \times V$ is the set of edges.
Each vertex $v\in V$ represents a piece of sequentially executed workload with worst-case execution time (WCET) $c(v)$.
An edge $(v_i, v_j)\in E$ represents the precedence relation between $v_i$ and $v_j$, i.e., $v_j$ can  start execution only after vertex $v_i$ finishes its execution.
A vertex with no incoming (outgoing) edges is called a \emph{source vertex} (\emph{sink vertex}). Without loss of generality, we assume that $G$ has exactly one source  (denoted as $v_{src}$), and one sink (denoted as $v_{snk}$).
In case $G$ has multiple source/sink vertices, a dummy source/sink vertex with zero WCET can be added to comply with our assumption.

A \emph{path} is denoted by $\mypath =  (\pv_0,\cdots,\pv_k)$, where $\forall i\in [0,k-1]: (\pv_i, \pv_{i+1})\in E$.
We also use $\mypath$ to denote the set of vertices that are in path $\mypath$.
The length of a path $\mypath$ is defined as $len(\mypath) \coloneqq \sum_{\pv_i\in \mypath}c(\pv_i)$.
A \emph{complete path} is a path $(\pv_0,\cdots,\pv_k)$ such that $\pv_0 = v_{src}$ and $\pv_k = v_{snk}$, i.e., a complete path is a path starting from the source vertex and ending at the sink vertex.
The \emph{longest path} is a complete path with largest $len(\mypath)$ among all paths in $G$, and we use $len(G)$ to denote the length of the longest path.
For any vertex set $V'\subset V$, $vol(V') \coloneqq \sum_{v\in V'}c(v)$.
The volume of $G$ is the total workload in the DAG task, defined as $vol(G) \coloneqq vol(V)=\sum_{v\in V} c(v)$.
If there is an edge $(u,v) \in E$, $u$ is a \emph{predecessor} of $v$. If there is a path in $G$ from $u$ to $v$, $u$ is an \emph{ancestor} of $v$. We use $pre(v)$ and $anc(v)$ to denote the set of predecessors and ancestors of $v$, respectively.

\begin{figure}[t]
\centering
\subfloat[a DAG example]{
    \includegraphics[width=0.37\linewidth]{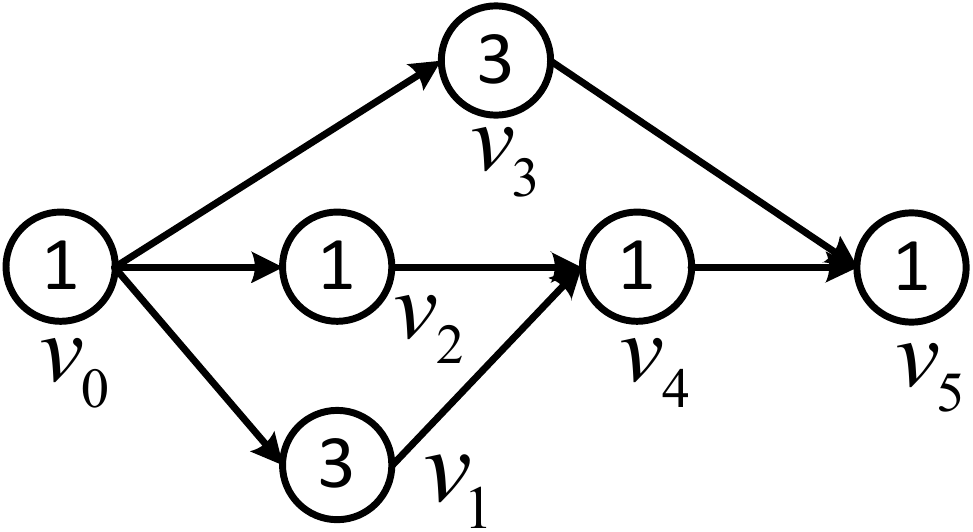}
    \label{fig:dag_example}
}
\hfill
\subfloat[a unit DAG]{
    \includegraphics[width=0.37\linewidth]{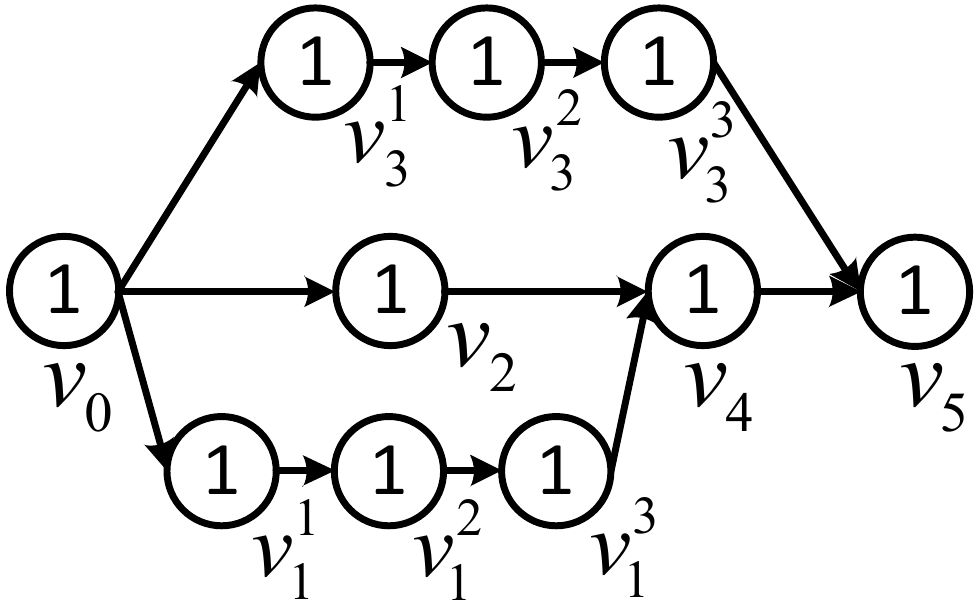}
    \label{fig:unit_dag}
}
\hfill
\subfloat[an execution sequence]{
    \includegraphics[width=0.48\linewidth]{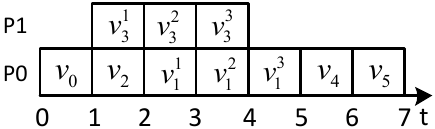}
    \label{fig:execution_sequence}
}
\hfill
\subfloat[another execution sequence]{
    \includegraphics[width=0.42\linewidth]{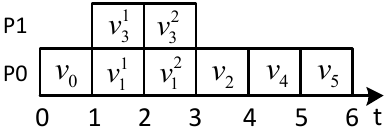}
    \label{fig:workload_unit}
}
\caption{An illustrative example.}
\label{fig:example}
\end{figure}

\begin{example}\label{exp:dag_example}
Fig. \ref{fig:dag_example} shows a DAG task $G$ where the number inside each vertex is its WCET.
$v_0$ and $v_5$ are the source and the sink vertex, respectively.
The longest path is $\mypath=(v_0, v_1, v_4, v_5)$, so $len(G) = len(\mypath)=6$.
For vertex set $V'=\{v_1, v_3\}$, $vol(V')=6$.
The volume of the DAG $vol(G)=10$.
For vertex $v_4$, $pre(v_4)=\{v_1, v_2\}$, $anc(v_4)=\{v_0, v_1, v_2\}$.
\end{example}

\subsection{Runtime Behavior}
\label{sec:runtime}
The vertices of DAG task $G$ are scheduled to execute on a multi-core platform with $m$ identical cores $(P_i)_0^{m-1}$  (which is the compact representation of $(P_0, P_1, \cdots, P_{m-1})$ ).
A vertex $v$ is \emph{eligible} if all its predecessors have finished and thus $v$ can immediately execute if there are available cores.
DAG task $G$ is scheduled by \emph{any} algorithm that satisfies the \emph{work-conserving} property, i.e.,
an eligible vertex must be executed if there are available cores.

At runtime, vertices of $G$ execute at certain time points on certain cores under the decision of the scheduling algorithm.
An \emph{execution sequence} of $G$ describes  which vertex executes on which core at every time point.

Since $c(v)$ is the \emph{worst-case} execution time, some vertices may actually execute for less than their WCETs.
In an execution sequence $\varepsilon$, a vertex $v$ has an execution time $e(v) \in [0, c(v)]$, which is the accumulated executing time of $v$ in $\varepsilon$.
The \emph{start time} $s(v)$ and \emph{finish time} $f(v)$ are the time point when $v$ first starts its execution and completes its execution, respectively.
Note that $e(v)$, $s(v)$ and $f(v)$ are all specific to a certain execution sequence $\varepsilon$, but we do not include $\varepsilon$ in their notations for simplicity.

Without loss of generality, we assume the source vertex of $G$ starts execution at time $0$, so the \emph{response time} of $G$ in an execution sequence equals $f(v_{snk})$.
This paper aims to derive a safe upper bound on the response time $R$ of $G$ in any execution sequence under any work-conserving scheduling.

\begin{example}\label{exp:runtime_example}
For the DAG $G$ in Fig. \ref{fig:dag_example}, suppose $m=2$.
Two possible execution sequences under work-conserving scheduling are shown in Fig. \ref{fig:execution_sequence} and Fig. \ref{fig:workload_unit}
 where $v_1^1$, $v_1^2$ and $v_1^3$ means the execution of first, second and third time unit of $v_1$.
In Fig. \ref{fig:execution_sequence}, every vertex in $G$ executes for its WCET.
In Fig. \ref{fig:workload_unit}, $v_1$ and $v_3$ execute for less than their WCETs and $e(v_1)=e(v_3)=2$.
In Fig. \ref{fig:workload_unit}, the start time and finish time of $v_1$ are $s(v_1)=1$ and $f(v_1)=3$, respectively.
The response times of $G$ for execution sequences in Fig. \ref{fig:execution_sequence} and Fig. \ref{fig:workload_unit} are 7 and 6, respectively.
\end{example}

\subsection{Graham's Bound}
\label{sec:motivation}
For a DAG task under work-conserving scheduling, Graham developed a well-known response time bound \cite{graham1969bounds}.
\begin{theorem}[Graham's Bound \cite{graham1969bounds}]
\label{thm:classic_bound}
The response time $R$ of DAG $G$ scheduled by work-conserving scheduling on $m$ cores is bounded by:
\begin{equation}
\label{equ:classic_bound}
R\le len(G)+\frac{vol(G)-len(G)}{m}
\end{equation}
\end{theorem}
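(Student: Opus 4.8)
The plan is to fix an arbitrary execution sequence $\varepsilon$ with response time $R = f(v_{snk})$ and to exhibit a single complete path whose length, together with the total workload, controls $R$. I would build this path $\lambda = (\pi_0,\dots,\pi_k)$ \emph{backwards} from the sink: set $\pi_k = v_{snk}$, and given a vertex $\pi_{i+1} \ne v_{src}$ already chosen, let $\pi_i$ be a predecessor of $\pi_{i+1}$ with the largest finish time, i.e. $\pi_i \in \argmax_{u\in pre(\pi_{i+1})} f(u)$; stop once $\pi_0 = v_{src}$ is reached. The key structural property of this construction is that $\pi_{i+1}$ becomes eligible exactly at time $f(\pi_i)$ (all its predecessors have finished by then, and not earlier, since $\pi_i$ finishes at $f(\pi_i)$), so on the interval $[f(\pi_i), s(\pi_{i+1}))$ the vertex $\pi_{i+1}$ is eligible but not running. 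Note also that $\lambda$ is a complete path, hence $len(\lambda) \le len(G)$.

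Next I would split the interval $[0,R)$ into the set $W_1$ of time instants at which some vertex of $\lambda$ is executing, and $W_2 = [0,R)\setminus W_1$. Two observations drive the argument. First, for $i<j$ the vertex $\pi_i$ is an ancestor of $\pi_j$, so $f(\pi_i)\le s(\pi_j)$ and the execution intervals of the path vertices are pairwise non-overlapping; hence $|W_1| \le \sum_i e(\pi_i) \le len(\lambda) \le len(G)$. Second, for any $t\in W_2$ there is an index $j<k$ with $f(\pi_j)\le t< s(\pi_{j+1})$: take the largest $j$ with $s(\pi_j)\le t$; since $\pi_j$ is not running at $t$ it has already finished, and $j\ne k$ because $f(\pi_k)=R>t$, while $s(\pi_{j+1})>t$ by maximality of $j$. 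By the eligibility property above, at time $t$ the vertex $\pi_{j+1}$ is eligible and not running, so the work-conserving property forces all $m$ cores to be busy at $t$, and — since no vertex of $\lambda$ runs during $W_2$ — busy with vertices outside $\lambda$.

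From here the accounting is routine. The total executed work $\sum_{v\in V} e(v) \le vol(G)$ decomposes over $W_1$ and $W_2$: during $W_1$ it is at least $|W_1|$ (the path's own execution), and during $W_2$ it equals $m\,|W_2|$ (all $m$ cores busy), and these two contributions are disjoint; hence $|W_1| + m\,|W_2| \le vol(G)$, i.e. $|W_2| \le (vol(G) - |W_1|)/m$. Therefore
\[
R = |W_1| + |W_2| \le |W_1| + \frac{vol(G) - |W_1|}{m} = \frac{(m-1)\,|W_1| + vol(G)}{m}.
\]
Since $m\ge 1$, the right-hand side is non-decreasing in $|W_1|$, and $|W_1|\le len(G)$, so $R \le \big((m-1)\,len(G) + vol(G)\big)/m = len(G) + (vol(G)-len(G))/m$. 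As $\varepsilon$ was an arbitrary work-conserving execution sequence, the bound holds in general.

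I expect the only delicate points to be: (i) justifying that the backward construction turns every ``gap'' between consecutive path vertices into a fully-busy interval — this is precisely where choosing the \emph{latest-finishing} predecessor and invoking the work-conserving property are essential, and it is the step most easily gotten wrong; and (ii) being careful in the workload accounting that work performed by the other cores \emph{during} $W_1$ only strengthens the inequality $|W_1| + m\,|W_2| \le vol(G)$ rather than invalidating it. The remaining manipulations are elementary interval bookkeeping.
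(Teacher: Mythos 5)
Your proof is correct and is essentially the argument the paper itself relies on: the theorem is only cited from Graham, but your backward latest-finishing-predecessor construction is exactly the paper's critical path (Definitions \ref{def:critical_predecessor}--\ref{def:critical_path}), your all-cores-busy claim on the gaps is Lemma \ref{lem:conserving_property}, and your $W_1$/$W_2$ accounting is Lemma \ref{lem:sequence_bound} specialized to $k=0$ with $\omega_0=\mypath^*_\varepsilon$ followed by the same monotonicity step used in Lemma \ref{lem:dag_bound}. The only sentence needing a patch is ``since $\pi_j$ is not running at $t$ it has already finished,'' which can fail under preemption; but if $\pi_j$ is suspended mid-execution it is itself eligible and idle, so work conservation still forces all $m$ cores busy at $t$ and the conclusion stands.
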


\begin{figure}[t]
  \centering
  \includegraphics[width=0.46\linewidth]{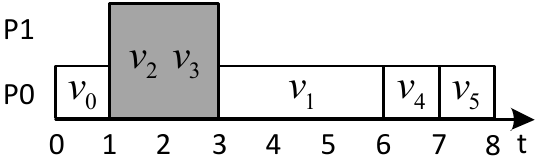}
  \caption{Computation of Graham's bound.}
  \label{fig:classic_bound}
\end{figure}

We use the example in Fig. \ref{fig:example} to illustrate the pessimism in Graham's bound.
By (\ref{equ:classic_bound}), assuming $m=2$, Graham's bound for the DAG task in Fig. \ref{fig:dag_example} is computed as $6+(10-6)/2=8$.
Fig. \ref{fig:classic_bound} illustrates the intuition of its computation.
Workload not in the longest path is considered to be the interference ($v_2$, $v_3$ in this example), which is equally distributed among all cores to calculate the delay to the longest path (the gray area with length $2$ in Fig. \ref{fig:classic_bound}).
However, the workload of $v_3$ must execute sequentially according to the semantics of DAG, which renders this ``equal distribution of interference'' impossible, since $v_3$'s workload of length $3$ cannot fit into an interval of length $2$.
This fact is also illustrated by the execution sequence in Fig. \ref{fig:execution_sequence}, where the real delay to the longest path is $1$, not $2$.
In this paper, we explore insights into the characterization of the execution of a DAG task to address this type of pessimism and derive a tighter response time bound for work-conserving scheduling which analytically dominates Graham's bound.


\section{Response Time Analysis}
\label{sec:analysis}
\begin{table}[t]
\centering
\caption{Major Notations Used in the Paper}
\label{tab:notation}
\begin{tabular}{ll}
\hline
\textbf{Notation}   & \textbf{Description} \\
\hline
$\varepsilon$   & an execution sequence \\
$\tu$           & a time unit \\
$c(v)$          & the WCET of vertex $v$ \\
$e(v)$          & the execution time of vertex $v$ \\
$s(v)$          & the start time of vertex $v$ \\
$f(v)$          & the finish time of vertex $v$ \\
$s(\tu)$        & the start time of time unit $\tu$ \\
$f(\tu)$        & the finish time of time unit $\tu$ \\

\hline
$len(\mypath)$  & the length of path $\mypath$ \\
$len(G)$        & the length of the longest path of DAG $G$ \\
$vol(V')$       & the volume (total workload) of vertex set $V'$ \\
$vol(G)$        & the volume of $G$ \\
$pre(v)$        & the set of predecessors of vertex $v$ \\
$anc(v)$        & the set of ancestors of vertex $v$ \\

\hline
$\mypath^*$     & the critical path (Definition \ref{def:critical_path}) \\
$\mypath^+$     & the restricted critical path (Definition \ref{def:restricted_path}) \\
$\omega$        & a virtual path (Definition \ref{def:virtual_path}) \\
$len(\omega)$   & the length of virtual path $\omega$ \\
$(\omega_i)_0^k$        & a virtual path list (Definition \ref{def:virtual_sequence}) \\
$(\mypath_i)_0^k$       & a generalized path list (Definition \ref{def:path_sequence}) \\

\hline
$V'_\varepsilon$        & the projection of $V'$ regarding $\varepsilon$ (Definition \ref{def:projection}) \\
$\mypath_\varepsilon$   & the projection of $\mypath$ regarding $\varepsilon$ (Definition \ref{def:projection}) \\
$\Delta(V')$    & the workload reduction of $V'$ in $\varepsilon$ (Equation \ref{equ:delta}) \\
$\bar{k}$       & a value returned by Algorithm \ref{alg:sequence_computation},  \\
~               & $\bar{k}+1$ is the number of long paths in $G$ \\
\hline
\end{tabular}
\end{table}

This section presents the methodology of deriving a tighter bound for a DAG task.
After introducing Lemma \ref{lem:sequence_bound}, an overview of the analysis method is provided in the end of Section \ref{sec:execution_sequence}.
Major notations used in this paper are summarized in Table \ref{tab:notation}.

\subsection{Analysis on an Execution Sequence}
\label{sec:execution_sequence}
We assume time is discrete and the length of a \emph{time unit} is $1$, which is reasonable, because everything in a digital computer is driven by discrete clocks.
We use $\tu$ to denote a time unit, and $s(\tu)$ and $f(\tu)$ the start time and finish time of $\tu$.

\textbf{Unit DAG.}
We transform each vertex of $G$ into a series of \emph{unit vertices}.
The WCET $c(v)$ of each unit vertex $v$ is $1$ (i.e., a time unit). The resulting DAG is a \emph{unit DAG}.
For example, for the DAG in Fig. \ref{fig:dag_example}, the unit DAG is shown in Fig. \ref{fig:unit_dag}.
Since a time unit cannot be further divided, in an execution sequence, the execution time $e(v)$ of a unit vertex $v$ is either $1$ or $0$.
A unit DAG is also a DAG and notations introduced for DAGs are also applicable to unit DAGs.
Unless explicitly specified, $G = (V, E)$ is a unit DAG in Section \ref{sec:analysis}.
Note that the unit DAG is merely used as an auxiliary concept for the proofs, and our results (i.e., Theorem \ref{thm:our_bound} and Algorithm \ref{alg:sequence_computation}) do \emph{not} need to really transfer the original DAG
into a unit DAG.

Our analysis focuses on an \emph{arbitrary} execution sequence $\varepsilon$ of unit DAG $G$.
Unless explicitly specified, the following definitions and discussions are all for execution sequence $\varepsilon$.

\begin{definition}[Critical Predecessor \cite{sun2020computing}]\label{def:critical_predecessor}
In an execution sequence, vertex $u$ is a \emph{critical predecessor} of vertex $v$, if
\begin{equation}\label{equ:critical_predecessor}
u=\argmax_{u_i \in pre(v)} \{f(u_i)\}
\end{equation}
\end{definition}

\begin{definition}[Critical Path \cite{he2019intra}]\label{def:critical_path}
In an execution sequence, a \emph{critical path} $\mypath^* = (\pv_0, \cdots, \pv_k)$ ending at vertex $v$ is a path satisfying the following two conditions.
\begin{itemize}
  \item $\pv_0=v_{src} \land \pv_k=v$;
  \item $\forall \pv_i \in \mypath^* \setminus \{\pv_0\}$, $\pv_{i-1}$ is a critical predecessor of $\pv_i$.
\end{itemize}
\end{definition}
As a special case of the above definition, a critical path of $G$ in an execution sequence is a critical path ending at $v_{snk}$.
For any vertex $v \ne v_{src}$, since a critical predecessor of $v$ must exist, we can always find the critical path ending at $v$.
The critical path is specific to an execution sequence of $G$. A critical path of $G$ in an execution sequence
is \emph{not} necessarily the longest path of $G$.

\begin{example}
For the execution sequence in Fig. \ref{fig:execution_sequence}, a critical path of $G$ is $(v_0, v_1^1, v_1^2, v_1^3, v_4, v_5)$.
In Fig. \ref{fig:workload_unit}, a critical path of $G$ is ($v_0, v_2, v_4, v_5$), which is not the longest path of $G$.
\end{example}

\begin{lemma}\label{lem:conserving_property}
In an execution sequence under work-conserving scheduling on $m$ cores, for any vertex $v$ and its critical predecessor $u$, all $m$ cores are busy in time interval $[f(u),s(v)]$.
\end{lemma}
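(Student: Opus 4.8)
The plan is to argue by contradiction on the work-conserving property. Suppose, for some time unit $\tu$ in the interval $[f(u),s(v)]$, not all $m$ cores are busy during $\tu$; i.e., at least one core is idle at some point strictly between $f(u)$ and $s(v)$. I will show this forces $v$ to be eligible and unscheduled at that moment, contradicting work-conservation.

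First I would establish that $v$ is eligible throughout $[f(u),s(v)]$. By Definition~\ref{def:critical_predecessor}, $u$ is a critical predecessor of $v$, so $f(u)=\max_{u_i\in pre(v)}\{f(u_i)\}$, meaning \emph{every} predecessor of $v$ has finished by time $f(u)$. Hence at any time point $t\ge f(u)$, all predecessors of $v$ are complete, so $v$ is eligible from $f(u)$ onward (and in particular on every time unit $\tu$ with $s(\tu)\ge f(u)$). Second, $v$ has not yet started before $s(v)$, by definition of the start time. So on any time unit $\tu\subseteq[f(u),s(v)]$ (with $f(\tu)\le s(v)$), the vertex $v$ is eligible but not executing.

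Now combine these with the assumed idle core: if on some such $\tu$ a core $P_i$ is idle while $v$ is eligible and not running, the scheduler could (and by work-conservation, must) assign $v$ to $P_i$ — contradiction. Therefore every core is busy on every time unit in $[f(u),s(v)]$, which is the claim. The boundary bookkeeping — whether the endpoints $f(u)$ and $s(v)$ themselves count, and the fact that time is discrete so $[f(u),s(v)]$ is a union of whole time units — needs care but is routine given the unit-DAG setup of Section~\ref{sec:execution_sequence}; one states the interval as the time units $\tu$ with $f(u)\le s(\tu)$ and $f(\tu)\le s(v)$.

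The only mild subtlety — the part I would be most careful about — is the edge case $f(u)=s(v)$, where the interval is empty and the statement is vacuous, and the related case where $v$ starts immediately at $f(u)$; these are consistent with the claim and worth a one-line remark. Everything else is a direct unwinding of the definitions of critical predecessor, eligibility, and the work-conserving property, with no real obstacle.
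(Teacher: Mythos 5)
Your proposal is correct and follows essentially the same route as the paper's proof: both observe that $v$ is eligible from $f(u)$ onward (since $u$ maximizes the finish time over $pre(v)$) and then invoke work-conservation to rule out an idle core before $s(v)$. You simply spell out the eligibility and boundary details that the paper leaves implicit.
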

\begin{proof}
Since $u$ is a critical predecessor of $v$, by Definition \ref{def:critical_predecessor}, $v$ is eligible at $f(u)$.
If some core is idle in $[f(u),s(v)]$, it contradicts the fact that the scheduling is work-conserving.
\end{proof}

In an execution sequence, the execution time of some vertices in a vertex set $V'$ may be less than their WCETs.
In the following, we introduce notations to describe workloads of vertices in an execution sequence.

\begin{definition}[Projection]\label{def:projection}
In a execution sequence $\varepsilon$, the \emph{projection} of a vertex set $V'$ is defined as
\begin{equation}\label{equ:worklaod_V}
V'_\varepsilon \coloneqq \{v \in V'| e(v)= 1 \textrm{ in } \varepsilon\}
\end{equation}
and the projection of a path $\mypath$ is defined as
\begin{equation}\label{equ:worklaod_path}
\mypath_\varepsilon \coloneqq \{\pv_i \in \mypath | e(\pv_i)= 1 \textrm{ in } \varepsilon\}
\end{equation}
\end{definition}

Intuitively, a projection $V'_\varepsilon$ is a vertex set including vertices from $V'$ whose execution time is not 0 in $\varepsilon$.
As a special case, $V_\varepsilon$ is the projection of the vertex set $V$ of the DAG $G$ in $\varepsilon$.
By definition,
\begin{equation}\label{equ:volume_V}
vol(V'_\varepsilon)=\sum_{v\in V'_\varepsilon} c(v) =\sum_{v\in V'} e(v)
\end{equation}
\begin{equation}\label{equ:lenght_path}
len(\mypath_\varepsilon)= \sum_{\pv_i \in \mypath_\varepsilon} c(\pv_i) = \sum_{\pv_i\in \mypath} e(\pv_i)
\end{equation}

\begin{example}\label{exp:projection}
Consider the execution sequence $\varepsilon$ in Fig. \ref{fig:workload_unit}.
For vertex set $V'=\{v_1^1, v_1^2, v_1^3, v_3^1, v_3^2, v_3^3 \}$, in $\varepsilon$, the execution times of some vertices from $V'$ is 0. $V'_\varepsilon=\{v_1^1, v_1^2, v_3^1, v_3^2 \}$.
The volume of $V'_\varepsilon$ is $vol(V'_\varepsilon)=4$, while $vol(V')=6$.
The total workload of $G$ in $\varepsilon$ is $vol(V_\varepsilon)=8$.
For path $\mypath=(v_0, v_1^1, v_1^2, v_1^3, v_4, v_5)$, $\mypath_\varepsilon=(v_0, v^1_1, v^2_1, v_4, v_5)$,
$len(\mypath_\varepsilon)=5$, while $len(\mypath)=6$.
\end{example}

Now we introduce a key concept \emph{virtual path} to describe the \emph{sequentially} executed workload in an execution sequence.

\begin{definition}[Virtual Path]\label{def:virtual_path}
In an execution sequence, a \emph{virtual path} $\omega$ is a set of vertices executing in different time units.
\end{definition}
Same as path, the length of a virtual path $\omega$ is defined as $len(\omega) \coloneqq \sum_{v \in \omega} c(v)$.
Since virtual path $\omega$ is defined regarding an execution sequence $\varepsilon$, a virtual path does not include vertices whose execution time in $\varepsilon$ is 0. In $\varepsilon$, $\forall v \in \omega$, $e(v)=c(v)=1$.

All the vertices in a virtual path of $\varepsilon$ do not execute in parallel in $\varepsilon$. In other words, a virtual path is a sequentially executed workload in $\varepsilon$.
Note that vertices in a virtual path of $\varepsilon$ may execute in parallel in another execution sequence $\varepsilon'$ of $G$.
A path is always a virtual path in any execution sequence. However, a virtual path is not necessarily a path.

\begin{example}
In Fig. \ref{fig:execution_sequence}, $\omega_0=(v_3^1, v_3^2, v_3^3, v_4)$ is a virtual path and $len(\omega_0)=4$. $\omega_0$ is also a virtual path in the execution sequence in Fig. \ref{fig:workload_unit}.
In Fig. \ref{fig:workload_unit}, $\omega_1=(v_1^1, v_3^2, v_2)$ is a virtual path and $len(\omega_1)=3$. $\omega_1$ is not a virtual path in the execution sequence in Fig. \ref{fig:execution_sequence}, where $v_1^1$ and $v_3^2$ execute in parallel.
\end{example}

\begin{definition}[Virtual Path List]\label{def:virtual_sequence}
A virtual path list is a set of disjoint virtual paths $(\omega_i)_0^k$ ($k \ge 0$), i.e.,
$$\forall i, j \in [0, k],\ \omega_i \cap \omega_j = \varnothing$$
\end{definition}
Here $(\omega_i)_0^k$ is the compact representation of $(\omega_0, \cdots, \omega_k)$.
Slightly abusing the notation, we also use $(\omega_i)_0^k$ to denote the set of vertices that are in some $\omega_i$ ($i \in [0, k]$).

\begin{figure}[t]
\centering
\begin{minipage}{.17\textwidth}
  \centering
  \includegraphics[width=1\linewidth]{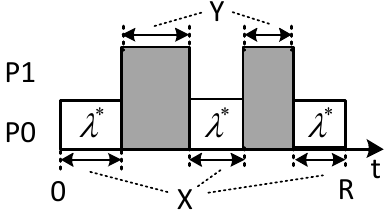}\\
  \caption{$X$ and $Y$.}
  \label{fig:interference}
\end{minipage}
\hfill
\begin{minipage}{.27\textwidth}
  \centering
  \includegraphics[width=0.85\linewidth]{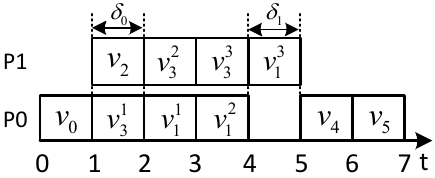}\\
  \caption{Workload swapping.}
  \label{fig:renaming}
\end{minipage}
\end{figure}

For critical path $\mypath^*$ of execution sequence $\varepsilon$, we define
\begin{itemize}
  \item $X$: time interval during which $\exists \pv_i \in \mypath^*$, $\pv_i$ is executing;
  \item $Y$: time interval before $f(v_{snk})$ during which $\forall \pv_i \in \mypath^*$, $\pv_i$ is \emph{not} executing.
\end{itemize}
In this paper, a time interval is not necessarily continuous.
Fig. \ref{fig:interference} illustrates the definitions of $X$ and $Y$.
As an example, in Fig. \ref{fig:workload_unit}, $X=[0, 1] \cup [3, 6]$, $Y=[1, 3]$, and $|X|=4$, $|Y|=2$.

\textbf{Workload Swapping.} Next, we introduce a procedure called \emph{workload swapping} which transforms an execution sequence into another one.
The purpose of workload swapping is to put the workload into cores in a way that is more convenient to present our analysis.
For an execution sequence $\varepsilon$, workload swapping includes the following two operations:
\begin{itemize}
  \item swap two vertices executing on two cores in the same time unit;
  \item move a vertex to the same time unit of another idle core (i.e., ``swap'' a vertex with an ``idle block'' on another core in the same time unit).
\end{itemize}
Applying the above procedure to $\varepsilon$ generates a new execution sequence $\varepsilon'$.
For a vertex $v$, the core on which $v$ is executing in $\varepsilon$ may be different from that of $\varepsilon'$.
Since the start time and finish time of each vertex in $\varepsilon'$ remain the same as $\varepsilon$,
the timing behaviors of $\varepsilon$ and $\varepsilon'$ are actually the same.
Therefore, the response time $R$ does not change.
Time intervals, such as $X$, $Y$, and the critical path do not change either. A virtual path in $\varepsilon$ is still a virtual path in $\varepsilon'$ and virtual path lists do not change. For a path $\mypath$, $\mypath_\varepsilon= \mypath_{\varepsilon'}$.

\begin{example}
For the execution sequence in Fig. \ref{fig:execution_sequence}, a new execution sequence generated by workload swapping (swapping $v_2$ and $v^1_3$ in time unit $\tu_0$, and swapping $v^3_1$ to core $P_1$ in $\tu_1$) is shown in Fig. \ref{fig:renaming}.
\end{example}

\begin{lemma}\label{lem:sequence_bound}
$\varepsilon$ is an execution sequence  of DAG $G$ under work-conserving scheduling on $m$ cores,
$\mypath^*$ is the critical path of $\varepsilon$.
$\mypath^*_\varepsilon$ is the projection of $\mypath^*$ in $\varepsilon$.
Given a virtual path list $(\omega_i)_0^k$ ($k \in [0, m-1]$) in $\varepsilon$ where $\omega_0 = \mypath^*_\varepsilon$, the response time $R$ of $\varepsilon$ is bounded by:
\begin{equation}\label{equ:sequence_bound}
R\le len(\mypath^*_\varepsilon)+\frac{vol(V_\varepsilon)-\sum_{i=0}^{k} len(\omega_i)}{m-k}
\end{equation}
\end{lemma}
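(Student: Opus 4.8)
The plan is to generalize the classical argument behind Graham's bound. In Graham's proof one splits the response time $f(v_{snk})$ into the time $X$ during which some vertex of the critical path executes, and the time $Y$ during which no critical-path vertex executes; by the work-conserving property (Lemma \ref{lem:conserving_property}, applied along consecutive critical-predecessor pairs) all $m$ cores are busy throughout $Y$, so $m\cdot|Y|$ is bounded by the workload that is \emph{not} on the critical path. Here $|X| = len(\mypath^*_\varepsilon)$, since along the critical path the relevant vertices execute back-to-back in $\varepsilon$ and only those with $e(\pv_i)=1$ occupy a time unit. The refinement is that during $Y$ the cores are not merely executing ``non-critical-path workload'': each of the virtual paths $\omega_1,\dots,\omega_k$ is, by definition, a set of vertices executing in distinct time units, so at most one vertex of each $\omega_i$ can execute in any single time unit of $\varepsilon$. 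Hence in each time unit of $Y$, at most $k$ cores can be occupied by vertices lying in $\bigcup_{i=1}^k \omega_i$, leaving at least $m-k$ cores that must (by work-conservation) be running vertices outside $(\omega_i)_1^k$ and also outside $\mypath^*_\varepsilon = \omega_0$.

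First I would perform workload swapping on $\varepsilon$ to obtain a normalized $\varepsilon'$ in which, in every time unit belonging to $Y$, the (at most $k$) vertices from $\bigcup_{i=1}^k\omega_i$ are pushed onto cores $P_0,\dots,P_{k-1}$ and the remaining $\ge m-k$ busy slots sit on cores $P_k,\dots,P_{m-1}$; this is legitimate because workload swapping changes neither $R$, nor $X$, $Y$, nor $\mypath^*_\varepsilon$, nor any $\omega_i$ (all noted in the paragraph on workload swapping). Next I would count the total amount of work done on cores $P_k,\dots,P_{m-1}$ during $Y$: it is exactly $(m-k)|Y|$ because those cores are fully busy throughout $Y$, and every unit of this work is the execution of a vertex of $V_\varepsilon$ that lies in neither $\omega_0$ nor any $\omega_i$, $i\ge 1$. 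Therefore
\begin{equation*}
(m-k)\,|Y| \;\le\; vol(V_\varepsilon) - \sum_{i=0}^{k} len(\omega_i),
\end{equation*}
using disjointness of the $\omega_i$ (Definition \ref{def:virtual_sequence}) so the lengths add, and $len(\omega_0)=len(\mypath^*_\varepsilon)$. Finally, since $R = f(v_{snk}) = |X| + |Y| = len(\mypath^*_\varepsilon) + |Y|$, substituting the bound on $|Y|$ and dividing by $m-k$ (legitimate as $k\le m-1$) yields (\ref{equ:sequence_bound}).

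The step I expect to be the main obstacle is making the counting argument on $Y$ fully rigorous, specifically verifying two things: (i) that each time unit in $Y$ is genuinely \emph{fully} busy on $m$ cores — this needs Lemma \ref{lem:conserving_property} applied to the critical-predecessor chain of $\mypath^*$, noting that $Y$ lies entirely within intervals of the form $[f(\pv_{i-1}), s(\pv_i)]$ (together with the tail before $f(v_{snk})$), and one must check that the critical-path vertices skipped because $e(\pv_i)=0$ do not create a gap in this chain; and (ii) that no vertex of $\omega_0 = \mypath^*_\varepsilon$ executes during $Y$ and that at most $k$ cores per time unit carry $\bigcup_{i=1}^k \omega_i$-vertices — the former is immediate from the definition of $Y$, the latter from the ``distinct time units'' property in Definition \ref{def:virtual_path}, but one must be careful that a single time unit could in principle host vertices from several different $\omega_i$, which is exactly why the count is $k$ and not $1$. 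Handling the boundary contribution (the part of $X$ and $Y$ near $f(v_{snk})$, and whether $Y$ can be empty) is routine but should be stated explicitly.
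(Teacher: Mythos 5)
Your proposal is correct and takes essentially the same route as the paper's own proof: the decomposition $R=|X|+|Y|$ with $|X|=len(\mypath^*_\varepsilon)$, the work-conserving argument that all $m$ cores are busy throughout $Y$, and workload swapping to confine $\omega_1,\dots,\omega_k$ to $k$ dedicated cores so that the remaining $m-k$ cores carry only workload outside $\bigcup_{i=0}^{k}\omega_i$, yielding $(m-k)|Y|\le vol(V_\varepsilon)-\sum_{i=0}^{k}len(\omega_i)$. The only cosmetic difference is that the paper swaps each entire $\omega_i$ onto a single core $P_i$ once and for all (possible since each $\omega_i$ has at most one vertex per time unit), whereas you rearrange per time unit within $Y$.
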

\begin{proof}
First, we claim the following properties.
\begin{itemize}
  \item [~~~~A.] $R=f(v_{snk})=|X|+|Y|$ (by definitions of $X$ and $Y$);
  \item [~~~~B.] $|X|=len(\mypath^*_\varepsilon)$ (by the definition of $X$);
  \item [~~~~C.] during time interval $Y$, all cores are busy with vertices from $V_\varepsilon \setminus \mypath^*_\varepsilon$ (by Lemma \ref{lem:conserving_property});
  \item [~~~~D.] $(\omega_i)_1^k \subseteq V_\varepsilon \setminus \mypath^*_\varepsilon$ (by Definition \ref{def:virtual_sequence}, virtual paths in a virtual path list are disjoint).
\end{itemize}

Recall that $vol(V_\varepsilon)$ is the total workload of $G$ in $\varepsilon$.
By workload swapping, we swap each virtual path $\omega_i$ to core $P_i$ ($i = 1, ..., k$), which generates a new execution sequence $\varepsilon'$ with the same $X$, $Y$, and $R$ as $\varepsilon$.
Let $W$ denote the set of vertices executing in other $m-k$ cores (i.e., $P_0$, $(P_i)_{k+1}^{m-1}$) during $Y$ in $\varepsilon'$, then by Property C, we have
\begin{equation}\label{equ:bound_Y}
|Y|=\frac{vol(W)}{m-k}
\end{equation}
Since we swap all $(\omega_i)_1^k$ into cores $(P_i)_1^{k}$, we have
$$W \subseteq I, \textrm{~where~} I \coloneqq V_\varepsilon \setminus \mypath^*_\varepsilon \setminus (\omega_i)_1^k$$
Therefore, by Property A, B and (\ref{equ:bound_Y}), we have
\begin{equation}\label{equ:workload_bound}
R=|X|+|Y|=len(\mypath^*_\varepsilon)+\frac{vol(W)}{m-k} \le len(\mypath^*_\varepsilon)\!+\!\frac{vol(I)}{m-k}
\end{equation}
By the definition of $I$ and Property D, we have
\begin{align*}
vol(I)  &= vol(V_\varepsilon)-len(\mypath^*_\varepsilon)-\sum_{i=1}^{k} len(\omega_i) \\
        &= vol(V_\varepsilon)-\sum_{i=0}^{k} len(\omega_i)
\end{align*}
which, together with (\ref{equ:workload_bound}), completes the proof.
\end{proof}

\textbf{Method Overview.}
Lemma \ref{lem:sequence_bound} gives a response time bound for a particular execution sequence.
However, Lemma \ref{lem:sequence_bound} cannot be directly used to upper-bound the response time of the DAG as it requires values available only when the execution sequence is given, which is unknown in offline analysis.
Therefore, in the following, we will bound these execution-sequence-specific values using static information of the DAG.
We rewrite (\ref{equ:sequence_bound}) as
\begin{equation}\label{equ:sequence_bound_rewrite}
R\le len(\mypath^*_\varepsilon)+\frac{vol(V_\varepsilon)- len(\omega_0)}{m-k} -\frac{\sum_{i=1}^{k} len(\omega_i)}{m-k}
\end{equation}
In Section \ref{sec:restricted}, we introduce a new abstraction called restricted critical path, and investigate its properties.
In Section \ref{sec:sequential_workload}, using the results of Section \ref{sec:restricted}, we lower-bound $\sum_{i=1}^{k} len(\omega_i)$ (i.e., lower-bound $\frac{\sum_{i=1}^{k} len(\omega_i)}{m-k}$), and then
in Section \ref{sec:bound_dag}, we upper-bound $len(\mypath^*_\varepsilon)+\frac{vol(V_\varepsilon)- len(\omega_0)}{m-k}$. Combining them yields an upper bound of the RHS (right-hand side) of (\ref{equ:sequence_bound_rewrite}).

\subsection{Restricted Critical Path}
\label{sec:restricted}
This subsection introduces a key concept \emph{restricted critical path}, which  is essentially a critical path identified within a subset of vertices in $G$.
In line with the critical path, the restricted critical path is to further characterize the execution behavior of a DAG task with the awareness of multiple long paths in the execution.
We first generalize the concept of path.
\begin{definition}[Generalized Path]\label{def:generalized_path}
A generalized path $\mypath=(\pv_0,\cdots, \pv_k)$ is a set of vertices such that $\forall i\in [0,k-1]$, there is a path $\mypath_i$ starting at $\pv_i$ and ending at $\pv_{i+1}$. In particular, a vertex set containing only one vertex is a generalized path.
\end{definition}

Intuitively, a generalized path \enquote{skips} some vertices in a path so the vertices in a generalized path may not directly connect to each other.
For example, in Fig. \ref{fig:dag_example}, $(v_0, v_2, v_4, v_5)$ is a path, while $(v_0, v_2, v_5)$ is a generalized path.
Same as path, the length of a generalized path $\mypath$ is defined as $len(\mypath) \coloneqq \sum_{\pv_i\in \mypath} c(\pv_i)$.

The relationship among \emph{path}, \emph{generalized path} and \emph{virtual path} can be summarized as
\[\textrm{path} \subseteq \textrm{generalized path} \subseteq \textrm{virtual path}\]
A path must be a generalized path; a generalized path is not necessarily a path. A generalized path must be a virtual path in any execution sequence of $G$; a virtual path in an execution sequence is not necessarily a generalized path.
Path and generalized path share a property: vertices in a path or a generalized path always execute sequentially in any execution sequence of $G$. However, this is not true for virtual path: vertices in a virtual path in one execution sequence may not  execute sequentially in another execution sequence.

\begin{definition}[Generalized Path List]\label{def:path_sequence}
A \emph{generalized path list} is a virtual path list, in which each element is a generalized path. A generalized path list is denoted as $(\mypath_i)_0^k$, where each $\mypath_i$ is a generalized path.
\end{definition}

\begin{definition}[Restricted Critical Path]\label{def:restricted_path}
For an execution sequence $\varepsilon$ and a generalized path list $(\mypath_i)_0^k$, $k \in [0, m-1]$, the \emph{restricted critical path}  $\mypath^+ = (\pv_0, \cdots, \pv_j)$ is a generalized path satisfying (\ref{equ:last_vertex}), (\ref{equ:restricted_path}) and (\ref{equ:first_vertex}):
\begin{equation}\label{equ:last_vertex}
\pv_j = \argmax_{u\in (\mypath_i)_0^k} \{f(u)\}
\end{equation}
\begin{equation}\label{equ:restricted_path}
\forall \pv_i \in \mypath \setminus \{\pv_0\}: \pv_{i-1} = \argmax_{u \in anc(\pv_i) \cap (\mypath_i)_0^k} \{f(u)\}
\end{equation}
\begin{equation}\label{equ:first_vertex}
anc(\pv_0) \cap (\mypath_i)_0^k = \varnothing
\end{equation}
\end{definition}
Intuitively, the concept of restricted critical path is obtained by applying the concept of critical path to vertices in $(\mypath_i)_0^k$.
The restricted critical path ends at the last finishing vertex in $(\mypath_i)_0^k$ (Equation \ref{equ:last_vertex}).
After $\pv_i$ is identified, we identify $\pv_{i-1}$ as the last finishing vertex of ancestors of $\pv_i$ in $(\mypath_i)_0^k$ (Equation \ref{equ:restricted_path}).
This recursive procedure stops until the last identified vertex's ancestor is not in $(\mypath_i)_0^k$ (Equation \ref{equ:first_vertex}).
If $(\mypath_i)_0^k$ includes all vertices of the graph, the restricted critical path of an execution sequence degrades to the critical path of that execution sequence.

\begin{example}\label{exp:restricted_path}
In Fig. \ref{fig:dag_workload}, the number inside vertices is to identify the vertex, not the WCET. The WCET of each vertex is 1.
Fig. \ref{fig:sequence_workload} shows an execution sequence $\varepsilon$, and the execution time of each vertex in $\varepsilon$ is 1. The number of cores is 3.
Let  $\mypath_0=(v_0, v_1, v_2, v_3, v_{12})$, which is the longest path. Let $\mypath_1=(v_4, v_5, v_6)$, and $\mypath_2=(v_7, v_8, v_9)$.
$(\mypath)_0^2$ is a generalized path list.
In $\varepsilon$, with respect to $(\mypath)_0^2$, we can identify a restricted critical path $\mypath^+=(v_0, v_4, v_5, v_6, v_{12})$ (the green vertices).
\end{example}

Now we transform an execution sequence $\varepsilon$ into a \enquote{regular} form by workload swapping.

\begin{definition}[Regular Execution Sequence]\label{def:regular_sequence}
Given an execution sequence $\varepsilon$ and a generalized path list $(\mypath_i)_0^k$ ($k \in [0, m-1]$), we transform $\varepsilon$ into a \emph{regular execution sequence} $\varepsilon'$ regarding $(\mypath_i)_0^k$ via workload swapping by the following two rules:
\begin{enumerate}
  \item  swap  $\mypath_i$ to core $P_i$ for each $i \in [0, k]$;
  \item swap other vertices into cores with a smaller index as much as possible.
\end{enumerate}
\end{definition}

\begin{lemma}\label{lem:idle_property}
Let $v$ be an arbitrary vertex of $G$, $\tu$ be a time unit before $v$ starts (i.e., $f(\tu) \le s(v)$) during which some core is idle, then there exists an ancestor of $v$ executing in $\tu$.
\end{lemma}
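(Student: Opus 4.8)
The plan is to argue by contradiction using the work-conserving property, in the same spirit as the proof of Lemma~\ref{lem:conserving_property}. First I would extract the right consequence of the hypothesis ``some core is idle during $\tu$''. Reading this as: there is a core that does nothing throughout the whole time unit $\tu$, the work-conserving property implies that at time $s(\tu)$ there can be no vertex that is eligible and not yet finished but not executing --- otherwise that vertex could be, hence (work-conservation) would be, placed on the idle core. So the key fact from step one is: \emph{every vertex that is eligible at $s(\tu)$ is, at $s(\tu)$, either already finished or currently executing (thus executing in $\tu$).}

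Next I would locate an ancestor of $v$ to which this fact applies. Since $f(\tu)\le s(v)$ we have $s(\tu)=f(\tu)-1<s(v)$, so at time $s(\tu)$ the vertex $v$ has neither started nor finished; by step one, $v$ is therefore not eligible at $s(\tu)$, which means some predecessor $p\in pre(v)$ has $f(p)>s(\tu)$. Hence the set $S:=\{\,w\in anc(v) : f(w)>s(\tu)\,\}$ is non-empty. I would then pick a minimal element $w$ of $S$ with respect to the ancestor relation (it exists because $S$ is finite and $G$ is acyclic). By minimality no ancestor of $w$ lies in $S$, so every predecessor $u$ of $w$ satisfies $f(u)\le s(\tu)$, and therefore $w$ is eligible at time $s(\tu)$.

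Finally I would combine the two ingredients. Because $w\in S$ we have $f(w)>s(\tu)$, so $w$ is not finished at $s(\tu)$; being also eligible at $s(\tu)$, step one forces $w$ to be executing during $\tu$. A short check rules out the degenerate cases: if $w$ had started strictly before $s(\tu)$ then, being a unit vertex, it would have finished by $s(\tu)$, and if $e(w)=0$ then $w$ would finish the instant it becomes eligible (by $s(\tu)$) --- both contradict $f(w)>s(\tu)$. Hence $w$ genuinely starts at $s(\tu)$ and occupies a core throughout $\tu$, and since $w\in anc(v)$ it is the desired ancestor.

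I expect the main obstacle to be step one together with the choice of $w$: precisely formalizing what ``a core is idle in $\tu$'' yields from work-conservation, and --- more importantly --- making sure the vertex we feed to that argument is genuinely \emph{eligible} at $s(\tu)$, not merely an unfinished ancestor of $v$. The idea that unblocks this is to descend from $v$ to an $anc$-minimal unfinished ancestor $w$, whose predecessors are then all finished by $s(\tu)$; everything else is routine bookkeeping with unit vertices (and the harmless case $e(w)=0$).
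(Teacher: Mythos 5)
Your proof is correct, but it takes a genuinely different route from the paper's. The paper argues by contradiction through the critical-path machinery: assuming no ancestor of $v$ executes in $\tu$, it takes the critical path $\mypath^*$ ending at $v$, observes that $\tu$ must then fall in a gap $[f(\pv_{i-1}),s(\pv_i)]$ where $\mypath^*$ is not executing, and invokes Lemma~\ref{lem:conserving_property} to conclude all cores are busy in $\tu$ --- contradicting the idle core. You instead argue directly and constructively: from the idle core you extract that every vertex eligible at $s(\tu)$ must be finished or executing, deduce that $v$ is not yet eligible so some ancestor is unfinished, and then descend to an $anc$-minimal unfinished ancestor $w$, which is forced to be eligible and hence executing in $\tu$. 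Your version is more self-contained (it needs only the work-conserving property, not Definition~\ref{def:critical_path} or Lemma~\ref{lem:conserving_property}) and it exhibits the executing ancestor explicitly; it also surfaces and handles the $e(w)=0$ corner case that the paper's proof passes over silently. The paper's version buys brevity by reusing machinery it has already built, and the critical-path viewpoint it rehearses here is the one that carries the rest of Section~\ref{sec:analysis}. Both arguments are sound; the minimal-element selection is exactly the right device to guarantee eligibility, which is the step your own closing remarks correctly identify as the crux.
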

\begin{proof}
We prove by contradiction. Assume all ancestors of $v$ do not execute in $\tu$.
Let $\mypath^* = (\pv_0, \cdots, \pv_{k})$ be the critical path ending at $v$ (i.e., $v = \pv_{k}$).
By our assumption, $\tu$ must be in some time interval when $\mypath^*$ is not executing.
By Lemma \ref{lem:conserving_property}, we know all cores are busy in $\tu$, which contradicts that some cores are idle in $\tu$. The lemma is proved.
\end{proof}

\begin{figure}[t]
\centering
\subfloat[a DAG task]{
    \includegraphics[width=0.32\linewidth]{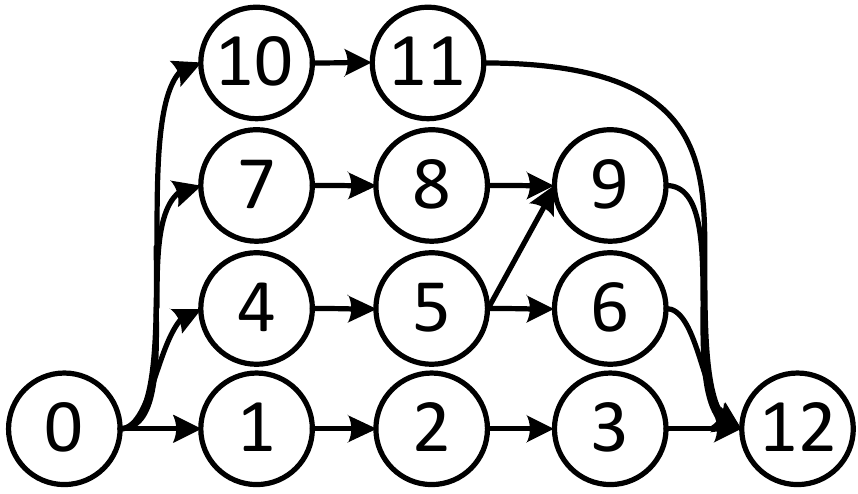}
    \label{fig:dag_workload}
}
\hfil
\subfloat[an execution sequence]{
    \includegraphics[width=0.41\linewidth]{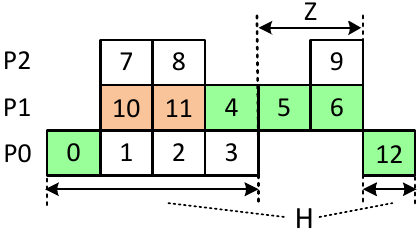}
    \label{fig:sequence_workload}
}
\caption{An example of restricted critical path.}
\label{fig:workload}
\end{figure}

\begin{lemma}\label{lem:restricted}
$(\mypath_i)_0^k$, $k \in [0, m-1]$, is a generalized path list where $\mypath_0$ is the longest path of $G$.
$\varepsilon$ is a regular execution sequence regarding $(\mypath_i)_0^k$.
$\mypath^+$ is the restricted critical path of $(\mypath_i)_0^k$ in $\varepsilon$.
There exists a virtual path $\eta$ in $\varepsilon$ satisfying all the following three conditions:
\begin{enumerate}
  \item [(i)] $\forall v \in \eta$, $v \notin (\mypath_i)_0^k$;
  \item [(ii)] $\forall v \in \eta$, $v$ executes on $(P_i)_0^{k}$;
  \item [(iii)] $len(\mypath^+_\varepsilon)+ len(\eta)=f(v_{snk})$.
\end{enumerate}
\end{lemma}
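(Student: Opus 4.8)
The plan is to split the time line $[0,f(v_{snk})]$ into the unit slots occupied by the vertices of $\mypath^+=(\pv_0,\dots,\pv_j)$ and the complementary ``gap'' slots, to exhibit in each gap slot one vertex that runs on cores $P_0,\dots,P_k$ but lies outside $(\mypath_i)_0^k$, and to let $\eta$ be the set of these vertices. Since a generalized path is a virtual path, the vertices of $\mypath^+$ sit in $j+1$ distinct unit slots; and because every vertex of $(\mypath_i)_0^k$ has execution time $1$, we have $\mypath^+_\varepsilon=\mypath^+$ and $len(\mypath^+_\varepsilon)=j+1$. I would first locate the gaps: $f(\pv_j)=f(v_{snk})$ since $v_{snk}\in\mypath_0\subseteq(\mypath_i)_0^k$ and nothing finishes after the sink, and $\pv_0=v_{src}$ since the unique source lies in $(\mypath_i)_0^k$ and is an ancestor of every other vertex, so (\ref{equ:first_vertex}) admits no other choice; hence $s(\pv_0)=0$ and the unit slots missed by $\mypath^+$ are exactly those lying within $[f(\pv_{i-1}),s(\pv_i)]$ for $i=1,\dots,j$, numbering $f(v_{snk})-(j+1)$ in total.

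The heart of the argument is the following claim: for every $i\in[1,j]$ and every unit slot $\tu$ within $[f(\pv_{i-1}),s(\pv_i)]$, some core among $P_0,\dots,P_k$ runs a vertex not in $(\mypath_i)_0^k$. Granting the claim, I pick such a vertex $v(\tu)$ for each gap slot and set $\eta\coloneqq\{v(\tu)\}$: the chosen slots are pairwise distinct, so $\eta$ is a virtual path (Definition \ref{def:virtual_path}); conditions (i) and (ii) hold by construction; and $len(\eta)$ equals the number of gap slots, i.e.\ $f(v_{snk})-len(\mypath^+_\varepsilon)$, which gives (iii).

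To prove the claim I would argue by contradiction for a fixed gap slot $\tu$ within $[f(\pv_{i-1}),s(\pv_i)]$; note that $\tu$ precedes $\pv_i$ and $f(\tu)>f(\pv_{i-1})$. By rule 1 of Definition \ref{def:regular_sequence}, any vertex of $(\mypath_i)_0^k$ active in $\tu$ occupies its own core $P_a$ with $a\le k$, and $\mypath_a$ contributes at most one such vertex. Suppose no core among $P_0,\dots,P_k$ carries an outside vertex in $\tu$; then each $P_a$ ($a\le k$) either runs $\mypath_a$'s vertex or is idle. If some such $P_a$ is idle, Lemma \ref{lem:idle_property} applied to $\pv_i$ yields an ancestor $w$ of $\pv_i$ active in $\tu$; here $w\notin(\mypath_i)_0^k$, for otherwise (\ref{equ:restricted_path}) forces $f(w)\le f(\pv_{i-1})<f(\tu)\le f(w)$; hence $w$ is an outside vertex, whose core must then have index $>k$, contradicting rule 2 of Definition \ref{def:regular_sequence}, which would have swapped $w$ onto the idle $P_a$. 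Otherwise all of $P_0,\dots,P_k$ run path-list vertices; taking the index $b\le k$ with $\pv_i\in\mypath_b$, the vertex $w_b\in\mypath_b$ active in $\tu$ starts strictly before $\pv_i$, hence precedes $\pv_i$ along the generalized path $\mypath_b$, so $w_b\in anc(\pv_i)\cap(\mypath_i)_0^k$ with $f(w_b)>f(\pv_{i-1})$ --- again contradicting (\ref{equ:restricted_path}). This establishes the claim and, with it, the lemma.

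I expect the delicate point to be exactly this last case analysis on where vertices sit among the cores: one has to use rule 2 of the regular execution sequence to rule out that all outside vertices active in a gap slot live on cores of index $>k$ while a lower-indexed core is idle, and separately use the maximality of $f(\pv_{i-1})$ over $anc(\pv_i)\cap(\mypath_i)_0^k$ to rule out that $P_0,\dots,P_k$ are all simultaneously occupied by path-list vertices.
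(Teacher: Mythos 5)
Your proof is correct and takes essentially the same route as the paper's: in each gap unit of $[f(\pv_{h-1}),s(\pv_h)]$ you derive a contradiction from the two regularity rules, Lemma~\ref{lem:idle_property}, and the maximality in (\ref{equ:restricted_path}), exactly mirroring the paper's two-case analysis (your idle-core subcase places the final contradiction in rule 2 of Definition~\ref{def:regular_sequence}, whereas the paper pushes it back into (\ref{equ:restricted_path}) --- a cosmetic difference). The only slip is the claim that every vertex of $(\mypath_i)_0^k$ has execution time $1$ (hence $len(\mypath^+_\varepsilon)=j+1$); unit vertices may execute for $0$ time, but this is harmless because the number of gap slots equals $f(v_{snk})$ minus the slots actually occupied by $\mypath^+$, which is $f(v_{snk})-len(\mypath^+_\varepsilon)$ by the definition of projection, so condition (iii) still follows.
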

\begin{proof}
Let $\mypath^+ = (\pv_0, \cdots, \pv_j)$. Since $\mypath_0$ is the longest path, $\mypath_0$ includes $v_{src}$ and $v_{snk}$, so $(\mypath_i)_0^k$ includes $v_{src}$ and $v_{snk}$.
By Definition \ref{def:restricted_path}, we can identify a restricted critical path $\mypath^+$ of $(\mypath_i)_0^k$ in $\varepsilon$ satisfying $\pv_0=v_{src}$ and $\pv_j=v_{snk}$.
For each $h \in [1, j]$, we examine the vertices in each time unit in $[f(\pv_{h-1}), s(\pv_h)]$. For each $\tu \in [f(\pv_{h-1}), s(\pv_{h})]$, we will prove that there is a vertex $v \notin (\mypath_i)_0^k$ executing on $(P_i)_0^{k}$ in $\tu$. We prove this by contradiction, assuming that such $v$ does not exist. There are two cases.
\begin{enumerate}
  \item In $\tu$, on $(P_i)_0^{k}$, all $k+1$ vertices are from $(\mypath_i)_0^k$.\\
    Suppose that $\pv_{h}$ executes on core $P_r$ ($r \in [0, k]$) and let $u$ denote the vertex executing on $P_r$ in $\tu$, so $u \in (\mypath_i)_0^k$.
    Since $u$ executes in $\tu$, we have $f(u) \ge f(\tu)$.
    Since $\varepsilon$ is a regular execution sequence, if a vertex in $(\mypath_i)_0^k$ executes on $P_r$, this vertex must be in $\mypath_r$, so both $\pv_{h}$ and $u$ are in $\mypath_r$,
    which implies that either $\pv_{h}$ is an ancestor of $u$ or the other way around.
    Moreover, since $u$ executes in $\tu \in [f(\pv_{h-1}), s(\pv_{h})]$, we know $u$ is an ancestor of $\pv_h$.
    Therefore $\exists u \in anc(\pv_{h}) \cap (\mypath_i)_0^k$ and $f(u) \ge f(\tu) >f(\pv_{h-1})$.

  \item In $\tu$, on $(P_i)_0^{k}$, less than $k+1$ vertices are from $(\mypath_i)_0^k$.\\
    By assumption, all vertices executing in $\tu$ are in $(\mypath_i)_0^k$, thus at least one core is idle in $\tu$.
    Therefore, by Lemma \ref{lem:idle_property}, we know there exists an ancestor of $\pv_h$, denoted by $u$, executing in $\tu$, which implies $f(u) \ge f(\tu)$.
    By assumption, all vertices executing in $\tu$ are in $(\mypath_i)_0^k$, so $u \in (\mypath_i)_0^k$.
    Therefore, $\exists u \in anc(\pv_h) \cap (\mypath_i)_0^k$ and $f(u) \ge f(\tu) >f(\pv_{h-1})$.
\end{enumerate}

In summary, for both cases we have proved that $\exists u \in anc(\pv_{h}) \cap (\mypath_i)_0^k$ and $f(u) >f(\pv_{h-1})$.
On the other hand, by Definition \ref{def:restricted_path}, in particular (\ref{equ:restricted_path}),
$\pv_{h-1}$ has the maximum finish time among vertices in $anc(\pv_{h}) \cap (\mypath_i)_0^k$,
which contradicts the existence of $u$.
Therefore, our assumption must be false, i.e., $\forall h \in [1, j]$, $\forall \tu \in [f(\pv_{h-1}), s(\pv_{h})]$, we can find such $v$ satisfying $v \notin (\mypath_i)_0^k$ and $v$ executes on $(P_i)_0^{k}$.

We collect such vertices in each time unit $\tu \in [f(\pv_{h-1}), s(\pv_{h})]$, for each $h \in [1, j]$.
These vertices form a virtual path $\eta$.
$len(\eta)$ equals the total length of time intervals $[f(\pv_{0}), s(\pv_{1})], [f(\pv_{1}), s(\pv_{2})], \cdots, [f(\pv_{j-1}), s(\pv_{j})]$, i.e., all
time intervals before $f(v_{snk})$ during which $\mypath^+_\varepsilon$ is not executing.
Therefore, $len(\mypath^+_\varepsilon)+ len(\eta)=f(v_{snk})$.
The lemma is proved.
\end{proof}

\begin{example}\label{exp:restricted}
$(\mypath_i)_0^2$ is the generalized path list in Example \ref{exp:restricted_path}.
Execution sequence $\varepsilon$ in Fig. \ref{fig:sequence_workload} is a regular execution sequence regarding $(\mypath)_0^2$.
In $\varepsilon$, the virtual path $\eta$ identified by Lemma \ref{lem:restricted} is $\eta=(v_{10}, v_{11} )$ (the brown vertices in Fig. \ref{fig:sequence_workload}).
\end{example}

By the properties of restricted critical path in Lemma \ref{lem:restricted}, the sequentially executed workload (i.e., the volume of virtual paths) for different complete paths can be analyzed quantitatively in Section \ref{sec:sequential_workload}.

\subsection{Lower-bounding $\sum_{i=1}^{k} len(\omega_i)$}
\label{sec:sequential_workload}
This subsection is the most technically challenging part of this work. We develop constructive proofs to derive the desired lower bound.
The bound in (\ref{equ:sequence_bound}) holds for an arbitrary virtual path list.
Therefore, we only need to construct a particular virtual path list for which $\sum_{i=1}^{k} len(\omega_i)$ can be lower-bounded.

For the longest path $\mypath$ and execution sequence $\varepsilon$, we define
\begin{itemize}
  \item $H$: time interval during which $\exists \pv_i \in \mypath$, $\pv_i$ is executing;
  \item $Z$: time interval before $f(v_{snk})$ during which $\forall \pv_i \in \mypath$, $\pv_i$ is \emph{not} executing.
\end{itemize}
By definition, $|H|=len(\mypath_\varepsilon)$, $|H|+|Z|=f(v_{snk})$.
Recall that $\mypath_\varepsilon$ is the projection of $\mypath$ regarding $\varepsilon$ (i.e., eliminating the vertices in $\mypath$ with zero execution time in $\varepsilon$).
Note that the definitions of $H$ and $Z$ are different from the definitions of $X$ and $Y$. $X$ is the time interval in which the \emph{critical} path is executing, whereas $H$ is the time interval in which the \emph{longest} path is executing.
As an example, in Fig. \ref{fig:workload_unit}, $H=[0, 3] \cup [4, 6]$, $Z=[3, 4]$, and $|H|=5$, $|Z|=1$.

$H$ and $Z$ are introduced to construct a new virtual path list based on generalized path list $(\mypath_i)_0^k$ where $\mypath_0$ is the longest path of $G$.
Recall the requirements for a virtual path list:
1) vertices in a virtual path must execute sequentially (i.e., execute in different time units);
2) the virtual paths in a virtual path list must be disjoint.
These two requirements should be kept in mind when constructing the desired virtual path list.

For a vertex set $V'$ and an execution sequence $\varepsilon$, we define
\begin{equation}\label{equ:delta}
\Delta(V') \coloneqq vol(V')-vol(V'_\varepsilon)
\end{equation}
$\Delta(V')$ represents the amount of workload reduction of $V'$ in $\varepsilon$. Obviously, if $V'_0 \subseteq V'_1$, then $\Delta(V'_0) \le \Delta(V'_1)$.
For example, in Fig. \ref{fig:workload_unit}, for $V'=\{v_1^1, v_1^2, v_1^3, v_3^1, v_3^2, v_3^3\}$, $\Delta(V')=6-4=2$ (see Example \ref{exp:projection}).

\begin{lemma}\label{lem:degree}
$W_0$ is a set of vertices. In execution sequence $\varepsilon$, $(\omega_i)_0^k$ is virtual path list satisfying $\bigcup_{i \in [0, k]} \omega_i = W_0$. $\forall v \in W_0$, $\forall u \notin W_0$ satisfying $u$ is in the same time unit as $v$. $W_1 \coloneqq (W_0 \setminus v) \cup u$. There exists a virtual path list $(\omega'_i)_0^k$ such that $\bigcup_{i \in [0, k]} \omega'_i = W_1$.
\end{lemma}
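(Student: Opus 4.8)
The plan is a direct surgical argument on the virtual path list. Since $(\omega_i)_0^k$ is a virtual path list, its members are pairwise disjoint and their union is $W_0 \ni v$, so $v$ lies in exactly one of them, say $v \in \omega_j$. I would then define $\omega'_j \coloneqq (\omega_j \setminus \{v\}) \cup \{u\}$, keep all other paths unchanged ($\omega'_i \coloneqq \omega_i$ for $i \ne j$), and verify that $(\omega'_i)_0^k$ is the required virtual path list whose union is $W_1$.

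Three checks are needed. First, $\omega'_j$ is a virtual path, i.e.\ all its vertices execute in distinct time units: the vertices of $\omega_j \setminus \{v\}$ inherit this property from $\omega_j$, and the only new pair to examine is $u$ against some $w \in \omega_j \setminus \{v\}$; but $u$ executes in the same time unit as $v$, and because $\omega_j$ is a virtual path, $v$ is the unique vertex of $\omega_j$ occupying that time unit, so no $w \in \omega_j \setminus \{v\}$ can share $u$'s time unit. Second, $(\omega'_i)_0^k$ is pairwise disjoint: the paths $\omega_i$ with $i \ne j$ were mutually disjoint and disjoint from $\omega_j \supseteq \omega_j \setminus \{v\}$, and since $u \notin W_0 = \bigcup_{i \in [0,k]} \omega_i$, the vertex $u$ lies in no $\omega_i$, hence in no $\omega'_i$ other than $\omega'_j$. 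Third, the union is correct: $\bigcup_{i \in [0,k]} \omega'_i = \big(\bigcup_{i \ne j} \omega_i\big) \cup (\omega_j \setminus \{v\}) \cup \{u\} = (W_0 \setminus \{v\}) \cup \{u\} = W_1$, using once more that $v$ belongs only to $\omega_j$.

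I expect no genuine obstacle here: this lemma is a bookkeeping step whose role is to migrate a virtual path list across a single workload swap. The only places where the hypotheses (as opposed to pure set algebra) are actually used are the distinct-time-units check for $\omega'_j$, which needs ``$u$ is in the same time unit as $v$'', and the disjointness and union computations, which need ``$u \notin W_0$''. I would also remark that the degenerate case $W_0 = \varnothing$ is vacuous (there is no such $v$), and that $u \ne v$ holds automatically since $u \notin W_0 \ni v$, so the substitution is genuine.
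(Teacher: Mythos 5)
Your proposal is correct and follows exactly the paper's own argument: locate the unique $\omega_j$ containing $v$, replace $v$ by $u$ in that path, and verify the virtual-path, disjointness, and union conditions using the two hypotheses ``$u$ is in the same time unit as $v$'' and ``$u \notin W_0$''. You simply spell out the three checks that the paper's one-line proof leaves implicit.
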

\begin{proof}
Suppose $v$ is from $\omega_j$. $\omega'_j \coloneqq (\omega_j \setminus v) \cup u$. Since $v$ and $u$ are in the same time unit, $\omega'_j$ is a virtual path.
$\forall i \in [0, k]$, $i \ne j$, $\omega'_i \coloneqq \omega_i$. Since $u \notin W_0$, obviously, $(\omega'_i)_0^k$ is virtual path list and $\bigcup_{i \in [0, k]} \omega'_i = W_1$.
\end{proof}

\begin{figure}[t]
  \centering
  \includegraphics[width=0.9\linewidth]{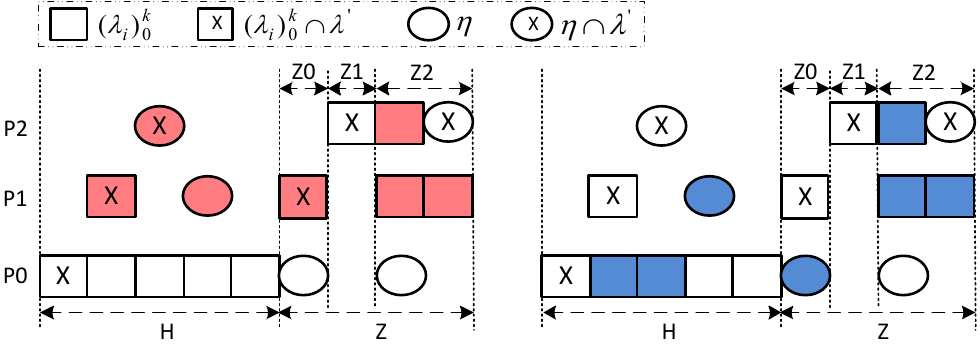}\\
  \caption{The construction in Lemma \ref{lem:workload}. Regular execution sequence $\varepsilon'$ is depicted in the figure. The red vertices represent $W$; the blue vertices represent $W'$. Vertices in $(\mypath_i)_0^k$ are depicted as squares; vertices in $\eta$ as circles. Vertices in $\mypath'$ are labeled with ``X''. Other vertices are not depicted.
  }
  \label{fig:lem_workload}
\end{figure}

Lemma \ref{lem:degree} will be used in Lemma \ref{lem:workload_exist}.
Now, we start to construct the desired virtual path list.
Again, $\varepsilon$ is an arbitrary execution sequence under analysis. $\mypath$ is the longest path of $G$.
$(\mypath_i)_0^k$ ($k \in [0, m-1]$) is a generalized path list where $\mypath_0=\mypath$.
For any complete path $\mypath'$ of $G$, we construct a virtual path list $(\omega_i)_0^k$ where $\omega_0=\mypath'_{\varepsilon}$.
$\mypath'_{\varepsilon}$ is the projection of $\mypath'$ regarding $\varepsilon$.

After constructing $\omega_0=\mypath'_{\varepsilon}$, in the following, we will construct $(\omega_i)_1^k$.
The construction of $(\omega_i)_1^k$ will be conducted on two levels. First, we construct a vertex set $W'$ which
satisfies $W'=\bigcup_{i \in [1, k]}\omega_i$, i.e., includes all vertices in $(\omega_i)_1^k$.
This will be done in Algorithm \ref{alg:construction}. Second, in Lemma \ref{lem:workload_exist}, we will prove that we can construct $(\omega_i)_1^k$ using the vertices in $W'$.

We construct $(\omega_i)_1^k$ based on $(\mypath_i)_1^k$.
Let $\theta \coloneqq (\mypath_{i\varepsilon})_1^k \cap \mypath'_\varepsilon$,
where
 $\mypath_{i\varepsilon}$ denotes the projection of $\mypath_{i}$ regarding $\varepsilon$, i.e., the set of vertices from $\mypath_{i}$ whose execution time is not 0 in $\varepsilon$ (Definition \ref{def:projection}).
Since $\mypath'$ is a path, $\theta$ is a generalized path.
We transform $\varepsilon$ into a regular execution sequence $\varepsilon'$ regarding $(\mypath_i)_0^k$.
Let $\mypath^+$ be the restricted critical path of $(\mypath_i)_0^k$.
By Lemma \ref{lem:restricted}, there is a virtual path $\eta$ in $\varepsilon'$ satisfying:
(i) $\forall v \in \eta$, $v \notin (\mypath_i)_0^k$;
(ii) $\forall v \in \eta$, $v$ executes on $(P_i)_0^{k}$;
(iii) $len(\mypath^+_\varepsilon)+ len(\eta)=f(v_{snk})$.
By (i), since $\eta \cap (\mypath_i)_0^k= \varnothing$, $\theta \subseteq (\mypath_i)_0^k$, we have $\eta \cap \theta=\varnothing$.
We divide $Z$ into the following time intervals:
\begin{itemize}
  \item $Z_0$: $\theta$ is executing and $\eta$ is executing;
  \item $Z_1$: $\theta$ is executing and $\eta$ is not executing;
  \item $Z_2$: $\theta$ is not executing.
\end{itemize}
Recall that a time interval is in general not continuous.
Obviously, $|Z|=|Z_0|+|Z_1|+|Z_2|$.
Let $\eta_H$ denote the set of vertices which is from $\eta$ and is in $H$.
Let $\theta_{Z_1}$ denote the set of vertices which is from $\theta$ and is in $Z_1$. We have $len(\theta_{Z_1})=|Z_1|$.
$W \coloneqq (\mypath_{i\varepsilon})_1^k \cup \eta_H \setminus \theta_{Z_1}$.

\begin{algorithm}[h]
    \caption{Constructing $W'$}\label{alg:construction}
    \DontPrintSemicolon
    \Input{$W$, $\mypath'_{\varepsilon}$}
    \Output{$W'$}
    $W' \leftarrow W$ \\
    \ForEach{$v \in W \cap \mypath'_{\varepsilon}$}{
        \uIf{$v$ \textup{is in} $H$}{
            $u \leftarrow$ the vertex which is from $\mypath$ and is in the same time unit as $v$ \\
            $W' \leftarrow (W' \setminus v) \cup u$
        }
        \ElseIf{$v$ \textup{is in} $Z_0$}{
            $u \leftarrow$ the vertex which is from $\eta$ and is in the same time unit as $v$ \\
            $W' \leftarrow (W' \setminus v) \cup u$
        }
    }
\end{algorithm}

Next, based on $W$, we construct a vertex set $W'$ using Algorithm \ref{alg:construction}.
See Fig. \ref{fig:lem_workload} for illustration.
For time interval $H$, in Line 3-5, we replace all vertices of $\mypath'_{\varepsilon}$ (squares or circles labeled with \enquote{X} in Fig. \ref{fig:lem_workload}) with vertices of the longest path $\mypath$ (squares in $P_0$).
For time interval $Z_0$, in Line 6-8, we replace all vertices of $\mypath'_{\varepsilon}$ with vertices of $\eta$ (circles).
Note that Algorithm \ref{alg:construction} is only for the constructive proof, not really needed
for computing our new response time bound.
We can prove the following properties of $W'$.

\begin{lemma}\label{lem:workload_exist}
There exists a virtual path list $(\omega_i)_1^k$ satisfying $\bigcup_{i \in [1, k]} \omega_i = W'$.
\end{lemma}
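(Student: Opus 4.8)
The plan is to obtain the virtual path list for $W'$ in two stages: first build one directly for the intermediate set $W$, and then transform it into one for $W'$ by replaying the steps of Algorithm~\ref{alg:construction} as successive applications of Lemma~\ref{lem:degree}.

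\textbf{Stage 1: a virtual path list for $W$.} I would first show that in the regular execution sequence $\varepsilon'$ every vertex of $W=(\mypath_{i\varepsilon})_1^k\cup\eta_H\setminus\theta_{Z_1}$ executes on one of the cores $P_1,\dots,P_k$. For a vertex of $(\mypath_{i\varepsilon})_1^k$ this is immediate from rule~1 of Definition~\ref{def:regular_sequence}. For $w\in\eta_H$, Lemma~\ref{lem:restricted}(ii) puts $w$ on some $P_r$ with $r\in[0,k]$; but $w$ executes in a time unit of $H$, in which a vertex of the longest path $\mypath=\mypath_0$ occupies $P_0$ (again rule~1), and $w\notin\mypath_0$ by Lemma~\ref{lem:restricted}(i), so $r\in[1,k]$. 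Deleting $\theta_{Z_1}$ only removes vertices. Hence, setting $\omega_i\coloneqq\{\,w\in W: w\text{ executes on }P_i\text{ in }\varepsilon'\,\}$ for $i\in[1,k]$ yields $k$ pairwise disjoint sets whose union is $W$, and each is a virtual path because a core runs at most one vertex per time unit. So $(\omega_i)_1^k$ is a virtual path list for $W$.

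\textbf{Stage 2: from $W$ to $W'$.} Algorithm~\ref{alg:construction} iterates over the vertices of $W\cap\mypath'_\varepsilon$, each time replacing the current vertex $v$ by a vertex $u$ in the same time unit ($u\in\mypath$ when $v\in H$, $u\in\eta$ when $v\in Z_0$; in both cases such a $u$ is unique and exists because $\mypath$, resp.\ $\eta$, is executing in that time unit). I would check: (a) $W\cap\mypath'_\varepsilon\subseteq H\cup Z_0$, so every such $v$ is really replaced --- here $(\mypath_{i\varepsilon})_1^k\cap\mypath'_\varepsilon=\theta\subseteq H\cup Z_0\cup Z_1$ since a vertex of $\theta$ executing inside $Z$ must lie in a time unit where $\theta$ executes, the $Z_1$ part is deleted, and $\eta_H\subseteq H$; (b) each replacement satisfies $u\notin W$ --- clear when $u\in\mypath_0$, which is disjoint from $W$, and when $u\in\eta$ because $v\in Z_0\subseteq Z$ forces $u$ to execute in $Z$, whereas the only part of $\eta$ contained in $W$ is $\eta_H\subseteq H$; (c) the $u$'s of distinct iterations are distinct, since distinct vertices of $W\cap\mypath'_\varepsilon$ are vertices of the path $\mypath'$ and hence execute in distinct time units, and $\mypath\cap\eta=\varnothing$ prevents an $\mypath$-replacement from coinciding with an $\eta$-replacement; (d) $u\neq v$ in each case by the same disjointness considerations. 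Consequently, at the $t$-th iteration the working set still contains $v_t$, does not contain $u_t$, and $u_t$ lies in the time unit of $v_t$ --- exactly the hypotheses of Lemma~\ref{lem:degree}. Applying Lemma~\ref{lem:degree} once per iteration, starting from the list of Stage~1, produces a virtual path list $(\omega_i)_1^k$ for the final set, which is $W'$.

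\textbf{Where the difficulty lies.} Stage~1 is short and is the conceptual core. The bulk of the argument, and the place a gap is most likely to hide, is the verification in Stage~2 that the replacement vertices produced by Algorithm~\ref{alg:construction} never collide with one another or with surviving vertices of $W$, so that every iteration legitimately matches the single-swap pattern of Lemma~\ref{lem:degree}; the key sub-step is establishing $W\cap\mypath'_\varepsilon\subseteq H\cup Z_0$, i.e.\ that no vertex of $W\cap\mypath'_\varepsilon$ lies in $Z_1$ or $Z_2$.
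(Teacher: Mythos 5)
Your proof takes essentially the same route as the paper's: first show that every vertex of $W$ executes on $(P_i)_1^k$ in the regular sequence $\varepsilon'$ (using rule~1 of Definition~\ref{def:regular_sequence} for $(\mypath_{i\varepsilon})_1^k$ and the occupation of $P_0$ by $\mypath_0$ during $H$ for $\eta_H$), group by core to get a virtual path list for $W$, then apply Lemma~\ref{lem:degree} once per iteration of Algorithm~\ref{alg:construction} to reach $W'$. Your Stage~2 verification that each swap legitimately satisfies the hypotheses of Lemma~\ref{lem:degree} (the $u$'s are fresh, distinct, and in the right time units) is correct and in fact more careful than the paper, which asserts this step without checking it.
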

\begin{proof}
By (ii) of Lemma \ref{lem:restricted}, in $\varepsilon'$, $\eta$ executes on $(P_i)_0^{k}$ and $\eta \cap (\mypath_i)_0^k= \varnothing$.
Also in $\varepsilon'$ during $H$, $\mypath_0$ is executing, which means that $\eta_H$ executes on $(P_i)_1^{k}$.
Therefore, In $\varepsilon'$, all vertices of $W$ execute in $(P_i)^k_1$, which means that there exists a virtual path list $(\omega'_i)_1^k$ satisfying $\bigcup_{i \in [1, k]} \omega'_i = W$.
By Lemma \ref{lem:degree}, after each iteration of the loop in Line 2-9 of Algorithm \ref{alg:construction},
there exists a virtual path list $(\omega_i)_1^k$ satisfying $\bigcup_{i \in [1, k]} \omega_i = W'$.
\end{proof}

\begin{lemma}\label{lem:workload_disjoint}
$W' \cap \mypath'_{\varepsilon} = \varnothing$.
\end{lemma}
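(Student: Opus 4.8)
The plan is to track exactly what Algorithm~\ref{alg:construction} does to the vertices of $\mypath'_{\varepsilon}$. Write $S \coloneqq W \cap \mypath'_{\varepsilon}$ for the set the loop iterates over. Since $W'$ starts out equal to $W$ and each iteration merely deletes one element of $S$ and inserts one replacement vertex $u$, I would first show that at the end $W' \subseteq (W\setminus S)\cup U$, where $U$ is the set of all inserted replacement vertices. As $W\setminus S$ is disjoint from $\mypath'_{\varepsilon}$ by construction, the lemma then reduces to two claims: (a) every $v\in S$ is actually processed by the loop, i.e.\ the time unit in which $v$ executes lies in $H$ or in $Z_0$ (this is exactly what makes the inclusion $W'\subseteq (W\setminus S)\cup U$ valid, since an unprocessed $v\in S$ would survive in $W'$); and (b) no inserted replacement vertex $u$ lies in $\mypath'_{\varepsilon}$. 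Claim (b) also disposes of the concern that a deleted $v\in S$ might be re-inserted later, since a re-inserted vertex would itself be a replacement vertex and $S\subseteq\mypath'_{\varepsilon}$.

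The one elementary fact I would invoke throughout is that $\mypath'$ is a path, so two distinct vertices of $\mypath'$ never execute in the same time unit. For claim (b): a processed $v\in S$ receives a replacement $u$ that executes in the same time unit as $v$ (such $u$ exists and is unique since at most one vertex of $\mypath$, resp.\ of $\eta$, runs per time unit), with $u\in\mypath=\mypath_0$ if $v$ is in $H$ and $u\in\eta$ if $v$ is in $Z_0$. In both cases $u\neq v$: indeed $v\in W\subseteq(\mypath_{i\varepsilon})_1^k\cup\eta_H$, the set $(\mypath_{i\varepsilon})_1^k$ is disjoint from $\mypath_0$ (the generalized path list is pairwise disjoint), $\eta$ is disjoint from $\mypath_0$ by Lemma~\ref{lem:restricted}(i), and in the $Z_0$ branch $v\notin H$ rules out $v\in\eta_H$ and hence $v\notin\eta$ (because $\eta\cap(\mypath_{i\varepsilon})_1^k=\varnothing$ forces every $\eta$-vertex of $W$ into $\eta_H$). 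So $u$ and $v$ are distinct vertices sharing a time unit; if $u$ lay in $\mypath'_{\varepsilon}\subseteq\mypath'$, then $u$ and $v\in\mypath'$ would contradict the path property, so $u\notin\mypath'_{\varepsilon}$.

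For claim (a): the schedule before $f(v_{snk})$ splits as $H\cup Z$ with $Z=Z_0\cup Z_1\cup Z_2$, so the time unit in which $v\in S$ executes lies in exactly one of $H,Z_0,Z_1,Z_2$; I would rule out $Z_1$ and $Z_2$. If it lies in $Z_1$, some vertex of $\theta$ executes there, and since $\theta\subseteq\mypath'_{\varepsilon}\subseteq\mypath'$ while $v\in\mypath'$, the path property forces $v$ to be that very vertex, so $v\in\theta_{Z_1}$ --- contradicting $v\in W$, since $W$ excludes $\theta_{Z_1}$. If it lies in $Z_2$, then $v\notin\eta_H$ (that time unit is not in $H$), so $v\in(\mypath_{i\varepsilon})_1^k$ and hence $v\in(\mypath_{i\varepsilon})_1^k\cap\mypath'_{\varepsilon}=\theta$; but then a vertex of $\theta$, namely $v$, executes in $Z_2$, contradicting the defining property of $Z_2$ that $\theta$ is not executing there. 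Therefore $v$'s time unit lies in $H$ or $Z_0$, i.e.\ $v$ is processed.

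I expect claim (a) to be the main obstacle: unlike (b), it requires unwinding the three-way partition $Z_0,Z_1,Z_2$ together with the definition of $W$ at the same time, and the $Z_2$ case hinges on the slightly subtle point that a vertex of $(\mypath_{i\varepsilon})_1^k$ that also lies on $\mypath'_{\varepsilon}$ is, by definition, already in $\theta$. Everything else is a routine, repeated appeal to the principle that no two vertices of the path $\mypath'$ run in parallel.
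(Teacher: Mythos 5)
Your proof is correct and follows essentially the same route as the paper's: a case analysis over the partition $H, Z_0, Z_1, Z_2$ of the schedule, using the fact that vertices of the path $\mypath'$ never execute in parallel. In fact your write-up is more complete than the paper's four-line argument, since you explicitly verify the two points the paper leaves implicit — that every $v\in W\cap\mypath'_{\varepsilon}$ really does fall into $H$ or $Z_0$ and hence gets processed by the loop, and that no inserted replacement vertex can itself lie on $\mypath'_{\varepsilon}$.
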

\begin{proof}
In $H$, all vertices from $\mypath'_{\varepsilon}$ are excluded from $W'$ in Line 3-5 of Algorithm \ref{alg:construction}.
In $Z_0$, all vertices from $\mypath'_{\varepsilon}$ are excluded from $W'$ in Line 6-8 of Algorithm \ref{alg:construction}.
In $Z_1$, $W \cap \theta_{Z_1} =\varnothing$, which means $W' \cap \theta_{Z_1} =\varnothing$.
In $Z_2$, by definition of $Z_2$, $\theta$ is not executing.
\end{proof}

\begin{lemma}\label{lem:workload_volume}
$vol(W') \ge \sum_{i=1}^{k} len(\mypath_i) -\Delta(V)$.
\end{lemma}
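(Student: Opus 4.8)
The plan is to track the volume through the construction of $W'$ from $W$, starting from $W \coloneqq (\mypath_{i\varepsilon})_1^k \cup \eta_H \setminus \theta_{Z_1}$, and to show that the volume can only decrease by at most $\Delta(V)$ relative to the ``ideal'' value $\sum_{i=1}^{k} len(\mypath_i)$. First I would observe that Algorithm~\ref{alg:construction} only performs same-time-unit swaps (each $v$ is replaced by some $u$ in the same time unit), so every swapped-in vertex has WCET $1$ just like the swapped-out one; hence $vol(W') = vol(W)$, and the whole burden reduces to lower-bounding $vol(W)$.

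Next I would estimate $vol(W)$ from its definition. Since the three sets $(\mypath_{i\varepsilon})_1^k$, $\eta_H$, and $\theta_{Z_1}$ interact in controlled ways ($\eta \cap (\mypath_i)_0^k = \varnothing$ gives $\eta_H \cap (\mypath_{i\varepsilon})_1^k = \varnothing$, and $\theta_{Z_1} \subseteq \theta \subseteq (\mypath_{i\varepsilon})_1^k$ while $\theta_{Z_1} \cap \eta_H = \varnothing$ because $\eta \cap \theta = \varnothing$), we get
\begin{equation*}
vol(W) = vol\big((\mypath_{i\varepsilon})_1^k\big) + len(\eta_H) - len(\theta_{Z_1}).
\end{equation*}
Now $vol\big((\mypath_{i\varepsilon})_1^k\big) = \sum_{i=1}^k len(\mypath_{i\varepsilon}) \ge \sum_{i=1}^k \big(len(\mypath_i) - \Delta(\mypath_i)\big) = \sum_{i=1}^k len(\mypath_i) - \sum_{i=1}^k \Delta(\mypath_i)$ by the definition of projection and of $\Delta$. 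The remaining task is to show the correction terms collectively cost at most $\Delta(V)$, i.e.
\begin{equation*}
len(\theta_{Z_1}) - len(\eta_H) + \sum_{i=1}^k \Delta(\mypath_i) \le \Delta(V).
\end{equation*}

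The main obstacle is controlling $len(\theta_{Z_1})$: this term is subtracted and so works against us. Here I expect the argument to use the timing geometry established by Lemma~\ref{lem:restricted}, namely $len(\mypath^+_\varepsilon) + len(\eta) = f(v_{snk}) = |H| + |Z|$, together with $len(\mypath^+_\varepsilon) = |H| - |\eta_H|$ (the restricted critical path occupies $H$ except where $\eta$ intrudes) to deduce $len(\eta) = |Z| + |\eta_H|$, hence $len(\eta) - len(\eta_H) = |Z| = |Z_0| + |Z_1| + |Z_2|$. Since $len(\theta_{Z_1}) = |Z_1|$ and $\eta \setminus \eta_H$ lives in $Z$, this should let me trade $len(\theta_{Z_1})$ against the part of $\eta$ outside $H$ and against the ``missing'' workload in $Z_2$. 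The slack between $\sum_i \Delta(\mypath_i) + |Z_1|$ and $\Delta(V)$ must then be absorbed by noting that vertices counted in $\Delta(\mypath_i)$, together with the vertices that would have had to execute during $Z_1$ and $Z_2$ had no workload been reduced, are all distinct vertices of $V$ with reduced execution time — so their total reduction is at most $\Delta(V) = vol(V) - vol(V_\varepsilon)$. Making this disjointness precise (ensuring no vertex is double-counted across the $\mypath_i$'s, across $\theta_{Z_1}$, and across $Z_2$) is the delicate bookkeeping step, and I would handle it by exhibiting an explicit injection from the counted reductions into $V \setminus V_\varepsilon$.
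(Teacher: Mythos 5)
Your skeleton matches the paper's proof up to the last step: $vol(W')=vol(W)$ because Algorithm~\ref{alg:construction} only swaps unit vertices within a time unit; $vol(W)=\sum_{i=1}^{k} len(\mypath_{i\varepsilon})+len(\eta_H)-len(\theta_{Z_1})$ by the disjointness relations you list; and $\sum_{i=1}^{k} len(\mypath_{i\varepsilon})=\sum_{i=1}^{k} len(\mypath_i)-\Delta((\mypath_i)_1^k)$. The gap is in how you treat $len(\eta_H)-len(\theta_{Z_1})$. Your identity $len(\mypath^+_\varepsilon)=|H|-|\eta_H|$ is false: the restricted critical path is drawn from all of $(\mypath_i)_0^k$, so it can execute during $Z$, i.e.\ while the longest path $\mypath_0$ is idle. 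The paper's own Example~7 is a counterexample: there $len(\mypath^+_\varepsilon)=5$ and $|Z|=|Z_1|=2$, so $\eta=(v_{10},v_{11})$ lies entirely in $H$ and $|H|-|\eta_H|=3\ne 5$. Consequently only $len(\eta)-len(\eta_H)\le|Z|$ holds, and substituting that leaves you short by exactly $|Z_1|=len(\theta_{Z_1})$. The missing observation is that, by the very definition of $Z_1$, $\eta$ does not execute during $Z_1$, so $len(\eta)-len(\eta_H)=len(\eta_{Z_0})+len(\eta_{Z_2})\le|Z_0|+|Z_2|=|Z|-|Z_1|$; together with $len(\theta_{Z_1})=|Z_1|$ this yields $len(\eta_H)-len(\theta_{Z_1})\ge len(\eta)-|Z|$. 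Then the two identities $len(\mypath^+_\varepsilon)+len(\eta)=f(v_{snk})=len(\mypath_{0\varepsilon})+|Z|$ give $len(\eta)-|Z|=len(\mypath_{0\varepsilon})-len(\mypath^+_\varepsilon)\ge-\Delta(\mypath_0)$, since $len(\mypath^+_\varepsilon)\le len(\mypath^+)\le len(\mypath_0)$ because $\mypath_0$ is the longest path. Finally $\Delta((\mypath_i)_1^k)+\Delta(\mypath_0)=\Delta((\mypath_i)_0^k)\le\Delta(V)$ by disjointness of the list.

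Your fallback plan --- absorbing $|Z_1|$ and the $Z_2$ deficit into $\Delta(V)$ via an injection into $V\setminus V_\varepsilon$ --- cannot work. The longest path can be idle throughout $Z$ purely because of precedence and scheduling delays while every vertex executes for its full WCET, in which case $\Delta(V)=0$ but $|Z_1|>0$ is possible (again Example~7: $\Delta(V)=0$, $|Z_1|=2$), so there is nothing in $V\setminus V_\varepsilon$ to inject into. In that regime the deficit $len(\theta_{Z_1})$ must be paid for by $len(\eta_H)$ itself, which is exactly what the chain of inequalities above accomplishes and what your sketch, as written, does not establish.
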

\begin{proof}
By (iii) of Lemma \ref{lem:restricted}, $len(\mypath^+_{\varepsilon})+len(\eta)=f(v_{snk})$.
By definitions of $H$, $Z$, $len(\mypath_{\varepsilon})+|Z|=f(v_{snk}) $.
Since $\mypath$ is the longest path of $G$, $len(\mypath) \ge len(\mypath^+) \ge len(\mypath^+_{\varepsilon})$.
Therefore,
\begin{align*}
len(\eta)-|Z| &=len(\mypath_{\varepsilon})-len(\mypath^+_{\varepsilon}) \\
&=len(\mypath)-\Delta(\mypath)-len(\mypath^+_{\varepsilon}) \ge -\Delta(\mypath)
\end{align*}

Let $\eta_{Z_0}$, $\eta_{Z_1}$, $\eta_{Z_2}$ denote the set of vertices which are from $\eta$ and are in $Z_0$, $Z_1$, $Z_2$, respectively. We have $\eta=\eta_H \cup \eta_{Z_0}\cup \eta_{Z_1}\cup \eta_{Z_2}$. By definitions of $Z_0$, $Z_1$, $Z_2$, $len(\eta_{Z_0})=|Z_0|$, $len(\eta_{Z_1})=0$, $len(\eta_{Z_2}) \le |Z_2|$. We have $len(\eta_H) \ge len(\eta)-|Z_0|-|Z_2|$.
Obviously, $len(\theta_{Z_1}) = |Z|-|Z_0|-|Z_2|$.
Therefore,
$$len(\eta_H)-len(\theta_{Z_1}) \ge len(\eta)-|Z| \ge -\Delta(\mypath)$$
By (\ref{equ:delta}), $\sum_{i=1}^{k} len(\mypath'_{i\varepsilon})= \sum_{i=1}^{k} len(\mypath_i)-\Delta((\mypath_i)_1^k) $.
\begin{align*}
vol(W') &=vol(W)=\sum_{i=1}^{k} len(\mypath_{i\varepsilon})+len(\eta_H)-len(\theta_{Z_1}) \\
& \ge \sum_{i=1}^{k} len(\mypath_i)-\Delta((\mypath_i)_1^k)-\Delta(\mypath) \\
& = \sum_{i=1}^{k} len(\mypath_i)-\Delta((\mypath_i)_0^k) \ge \sum_{i=1}^{k} len(\mypath_i)-\Delta(V)
\end{align*}
\end{proof}

\begin{lemma}\label{lem:workload}
$\varepsilon$ is an execution sequence. $\mypath$ is the longest path of $G$.
$(\mypath_i)_0^k$, $k \in [0, m-1]$, is a generalized path list where $\mypath_0=\mypath$.
For any complete path $\mypath'$ of $G$, there is a virtual path list $(\omega_i)_0^k$ where $\omega_0=\mypath'_{\varepsilon}$, satisfying the following condition.
\begin{equation}\label{equ:workload}
\sum_{i=1}^{k} len(\omega_i) \ge \sum_{i=1}^{k} len(\mypath_i)-\Delta(V)
\end{equation}
\end{lemma}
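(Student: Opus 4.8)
The plan is to assemble the three preceding lemmas (Lemmas \ref{lem:workload_exist}, \ref{lem:workload_disjoint}, \ref{lem:workload_volume}) into the virtual path list claimed by the statement. All the combinatorial work has already been done: the vertex set $W$ and then $W'$ were built explicitly (the latter via Algorithm \ref{alg:construction}), and $vol(W')$ was bounded below. So this final step is essentially bookkeeping that ties those pieces together and then transfers the conclusion from the regular execution sequence $\varepsilon'$ back to $\varepsilon$.

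First I would set $\omega_0 \coloneqq \mypath'_{\varepsilon}$. Since $\mypath'$ is a complete path, it is in particular a path, hence a virtual path in every execution sequence of $G$; taking its projection only deletes the zero-execution-time vertices, leaving a set of vertices that still execute in pairwise distinct time units, so $\omega_0$ is a virtual path. Next I would invoke Lemma \ref{lem:workload_exist} to obtain virtual paths $\omega_1, \dots, \omega_k$ that are pairwise disjoint (disjointness being part of the definition of a virtual path list) and satisfy $\bigcup_{i=1}^{k}\omega_i = W'$. To see that $(\omega_i)_0^k$ is itself a virtual path list, the only remaining requirement is that $\omega_0$ be disjoint from each $\omega_i$ with $i \ge 1$, i.e.\ from $W'$; this is exactly Lemma \ref{lem:workload_disjoint}. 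I would also remark that these objects are constructed in the regular execution sequence $\varepsilon'$ obtained from $\varepsilon$ by workload swapping, but workload swapping preserves every $s(v)$ and $f(v)$, so $\mypath'_{\varepsilon'} = \mypath'_{\varepsilon}$, every virtual path in $\varepsilon'$ is a virtual path in $\varepsilon$, and the response time is unchanged — hence $(\omega_i)_0^k$ is a valid virtual path list for $\varepsilon$ with $\omega_0 = \mypath'_{\varepsilon}$.

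Finally, to establish (\ref{equ:workload}): because $\omega_1, \dots, \omega_k$ are disjoint with union $W'$, we have $\sum_{i=1}^{k} len(\omega_i) = vol(W')$, and Lemma \ref{lem:workload_volume} gives $vol(W') \ge \sum_{i=1}^{k} len(\mypath_i) - \Delta(V)$, which is precisely the inequality to be proved.

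The step that carries the real weight — though it has effectively been discharged already — is confirming that the family is genuinely a virtual path list, i.e.\ the disjointness of $\omega_0$ from the rest. This is the entire reason Algorithm \ref{alg:construction} replaces the vertices of $\mypath'_{\varepsilon}$ that lie in $H$ (by vertices of $\mypath$) and those that lie in $Z_0$ (by vertices of $\eta$) instead of keeping them, and the reason the term $\theta_{Z_1}$ is subtracted in the definition of $W$; so in writing the proof I would be careful to cite Lemma \ref{lem:workload_disjoint} for this point rather than re-deriving it, and similarly cite Lemma \ref{lem:workload_volume} for the volume estimate, keeping the argument here short.
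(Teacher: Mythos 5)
Your proposal is correct and follows essentially the same route as the paper's own proof: cite Lemma \ref{lem:workload_exist} for the existence of $(\omega_i)_1^k$ with union $W'$, Lemma \ref{lem:workload_disjoint} for disjointness from $\omega_0=\mypath'_{\varepsilon}$, and Lemma \ref{lem:workload_volume} for the volume bound. The only addition is your explicit remark that workload swapping preserves timing so the list transfers from $\varepsilon'$ back to $\varepsilon$, which the paper leaves implicit.
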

\begin{proof}
In the above construction, we have $W'$.
By Lemma \ref{lem:workload_exist}, there exists a virtual path list $(\omega_i)_1^k$ satisfying $\bigcup_{i \in [1, k]} \omega_i = W'$.
By Lemma \ref{lem:workload_disjoint}, $W' \cap \mypath'_{\varepsilon} = \varnothing$ which means
$(\omega_i)_1^k$ and $\mypath'_{\varepsilon}$ are disjoint.
Therefore, $(\omega_i)_0^k$ where $\omega_0=\mypath'_{\varepsilon}$ is a virtual path list.
By Lemma \ref{lem:workload_volume},
we have $\sum_{i=1}^{k} len(\omega_i) =vol(W') \ge \sum_{i=1}^{k} len(\mypath_i)-\Delta(V)$.
The lemma is proved.
\end{proof}

The main idea in the proof of Lemma \ref{lem:workload} is to construct $(\omega_i)_1^k$ using $(\mypath_i)_1^k$. $\mypath'_\varepsilon$ is $\omega_0$ and vertices in $(\mypath_i)_1^k \setminus \theta$ are used for $(\omega_i)_1^k$. However, since $\theta$ is in $\mypath'$ and $\mypath'_\varepsilon= \omega_0$, vertices in $\theta$ cannot be used for $(\omega_i)_1^k$ (recall that virtual paths in a virtual path list are disjoint).
Therefore, to construct $(\omega_i)_1^k$ using $(\mypath_i)_1^k$ is to replace vertices in $\theta$ using vertices that are not in $\mypath'$.
These vertices used to replace $\theta$ are from $\mypath_0$ and $\eta$.
We use the following example to explain this.

\begin{example}
$(\mypath_i)_0^2$ is the generalized path list in Example \ref{exp:restricted_path}.
Let $\mypath'=(v_0, v_4, v_5, v_9, v_{12})$.
Since in Fig. \ref{fig:sequence_workload} each vertex executes for its WCET, $\Delta(V)=0$.
The virtual path $\eta$ identified by Lemma \ref{lem:restricted} is $\eta=(v_{10}, v_{11} )$ (the brown vertices in Fig. \ref{fig:sequence_workload}). $\theta=(v_4, v_5, v_9)$. $|Z|=|Z_1|=2$.
$v_3$ is to replace $v_4$, and $v_{10}$, $v_{11}$ are to replace $v_5$, $v_9$.
Then a new virtual path list $(\omega)_0^2$ is constructed, where $\omega_0=\mypath'$, $\omega_1=(v_{10}, v_{11}, v_3, v_6)$, and $\omega_2= (v_7, v_8)$.
We have $vol(\omega_1 \cup \omega_2)=6 \ge vol(\mypath_1 \cup \mypath_2)=6$.
\end{example}

%

\subsection{Bound for the DAG Task}
\label{sec:bound_dag}
For concise presentation, we define a function
\begin{equation}\label{equ:B_function}
B(x, y, z) \coloneqq x(1-\frac{1}{m-k})+\frac{y-z}{m-k}
\end{equation}
$B(x, y, z)$ is monotonically increasing with respect to $x$ and $y$, and decreasing with respect to $z$.

Using $B$ function, Lemma \ref{lem:sequence_bound} can be rewritten as
\begin{align*}
R &\le len(\mypath^*_\varepsilon)+\frac{vol(V_\varepsilon)-\sum_{i=0}^{k} len(\omega_i)}{m-k} \nonumber \\
  &=   len(\mypath^*_\varepsilon)(1-\frac{1}{m-k})+\frac{vol(V_\varepsilon)-\sum_{i=1}^{k} len(\omega_i)}{m-k} \\
  &=   B(len(\mypath^*_\varepsilon), vol(V_\varepsilon), \sum_{i=1}^{k} len(\omega_i))
\end{align*}

\begin{lemma}\label{lem:dag_bound}
Given a generalized path list $(\mypath_i)_0^k$ ($k \in [0, m-1]$) where $\mypath_0$ is the longest path of $G$,
the response time $R$ of DAG $G$ scheduled by work-conserving scheduling on $m$ cores is bounded by:
\begin{equation}\label{equ:dag_bound}
R\le len(G)+\frac{vol(G)-\sum_{i=0}^{k} len(\mypath_i)}{m-k}
\end{equation}
\end{lemma}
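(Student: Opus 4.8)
The plan is to take an arbitrary execution sequence $\varepsilon$ of $G$ and massage the per-sequence bound of Lemma~\ref{lem:sequence_bound} into the static bound~(\ref{equ:dag_bound}) by chaining together the constructions of Section~\ref{sec:sequential_workload} and the monotonicity of $B$. Concretely, fix $\varepsilon$ and let $\mypath^*$ be its critical path. Let $\mypath'$ be a \emph{complete} path of $G$ that contains the critical path as a sub-generalized-path — more precisely, extend $\mypath^*$ (which ends at $v_{snk}$ and starts at some vertex reachable from $v_{src}$) to a complete path; this is possible because every vertex lies on some complete path. Apply Lemma~\ref{lem:workload} to this $\mypath'$ and the given generalized path list $(\mypath_i)_0^k$: it yields a virtual path list $(\omega_i)_0^k$ with $\omega_0 = \mypath'_\varepsilon$ and $\sum_{i=1}^k len(\omega_i) \ge \sum_{i=1}^k len(\mypath_i) - \Delta(V)$.

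Next I would feed this virtual path list into Lemma~\ref{lem:sequence_bound} — but note a subtlety: Lemma~\ref{lem:sequence_bound} requires $\omega_0 = \mypath^*_\varepsilon$, whereas here $\omega_0 = \mypath'_\varepsilon \supseteq \mypath^*_\varepsilon$. So I expect the argument actually reruns the reasoning of Lemma~\ref{lem:sequence_bound} with $\mypath'$ in place of $\mypath^*$ (or invokes a variant), getting $R \le B(len(\mypath'_\varepsilon), vol(V_\varepsilon), \sum_{i=1}^k len(\omega_i))$, and then uses monotonicity of $B$: decreasing in its third argument, so substituting the lower bound $\sum_{i=1}^k len(\mypath_i) - \Delta(V)$ only increases the RHS; increasing in its first argument, and $len(\mypath'_\varepsilon) \le len(\mypath') \le len(G)$ since $\mypath'$ is a complete path and $len(G)$ is the longest; increasing in its second argument, and $vol(V_\varepsilon) = vol(V) - \Delta(V) = vol(G) - \Delta(V)$. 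After these substitutions we get
\begin{equation*}
R \le B\bigl(len(G),\, vol(G)-\Delta(V),\, \textstyle\sum_{i=1}^k len(\mypath_i) - \Delta(V)\bigr).
\end{equation*}
Expanding $B$ via~(\ref{equ:B_function}), the two $\Delta(V)$ terms are $-\frac{\Delta(V)}{m-k}$ from the second slot and $+\frac{\Delta(V)}{m-k}$ from the third slot, so they cancel exactly, leaving $R \le len(G)(1-\frac{1}{m-k}) + \frac{vol(G) - \sum_{i=1}^k len(\mypath_i)}{m-k} = len(G) + \frac{vol(G) - \sum_{i=0}^k len(\mypath_i)}{m-k}$, which is~(\ref{equ:dag_bound}). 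Since $\varepsilon$ was arbitrary, the bound holds for all work-conserving executions.

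The main obstacle I anticipate is the bookkeeping in the first two paragraphs: one must ensure that Lemma~\ref{lem:sequence_bound}'s hypothesis $\omega_0 = \mypath^*_\varepsilon$ is not actually needed in the form stated, or else re-derive its conclusion for a general complete path $\mypath'$ whose projection $\mypath'_\varepsilon$ need not be a critical path. The key point making this work is that Lemma~\ref{lem:sequence_bound}'s proof only used (i) $|X| = len(\mypath^*_\varepsilon)$ and (ii) that during the complement interval all cores run non-$\mypath^*_\varepsilon$ workload — and Lemma~\ref{lem:restricted}/\ref{lem:workload} were engineered precisely to supply the analogue of~(ii) relative to an arbitrary path via the virtual path $\eta$. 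The cancellation of the $\Delta(V)$ terms is the elegant part and should fall out cleanly from the definition of $B$; the only care needed there is confirming that $m-k \ge 1$ so the division is legitimate, which holds since $k \le m-1$.
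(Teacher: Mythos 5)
Your proposal is correct and follows the paper's proof essentially step for step: apply Lemma~\ref{lem:workload} to obtain a virtual path list with $\sum_{i=1}^{k}len(\omega_i)\ge\sum_{i=1}^{k}len(\mypath_i)-\Delta(V)$, feed it into Lemma~\ref{lem:sequence_bound} written via $B$, use monotonicity of $B$ and the exact cancellation of the two $\Delta(V)$ terms, and finish with $len(\mypath^*_\varepsilon)\le len(G)$. The one detour you take --- extending $\mypath^*$ to a complete path $\mypath'$ and then worrying about a variant of Lemma~\ref{lem:sequence_bound} for $\omega_0=\mypath'_\varepsilon$ --- is unnecessary: by Definition~\ref{def:critical_path} the critical path of $G$ already starts at $v_{src}$ and ends at $v_{snk}$, hence is itself a complete path, so the paper simply invokes Lemma~\ref{lem:workload} with $\mypath'=\mypath^*$ and Lemma~\ref{lem:sequence_bound} applies exactly as stated.
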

\begin{proof}
Let $\varepsilon$ be an arbitrary execution sequence of $G$ under work-conserving scheduling. Let $\mypath^*$ denote the critical path of $\varepsilon$.
By Lemma \ref{lem:workload}, for $\mypath^*$, there is a virtual path list $(\omega_i)_0^k$ where $\omega_0=\mypath^*_\varepsilon$ satisfying
$\sum_{i=1}^{k} len(\omega_i) \ge \sum_{i=1}^{k} len(\mypath_i)-\Delta(V)$.
By Lemma \ref{lem:sequence_bound}, the response time $R$ of $\varepsilon$ satisfies
\begin{align*}
R &\le B(len(\mypath^*_\varepsilon), vol(V_\varepsilon), \sum_{i=1}^{k} len(\omega_i)) \\
  &\le B(len(\mypath^*_\varepsilon), vol(G)-\Delta(V), \sum_{i=1}^{k} len(\mypath_i)-\Delta(V)) \\
  & =  B(len(\mypath^*_\varepsilon), vol(G), \sum_{i=1}^{k} len(\mypath_i)) \\
  &\le B(len(\mypath_0), vol(G), \sum_{i=1}^{k} len(\mypath_i))
\end{align*}

Recall that $\mypath_0$ is the longest path and $len(\mypath_0)=len(G)$. The lemma is proved.
\end{proof}

\begin{theorem}\label{thm:our_bound}
Given a generalized path list $(\mypath_i)_0^k$ ($k \in [0, m-1]$) where $\mypath_0$ is the longest path of $G$, the response time $R$ of DAG $G$ scheduled by work-conserving scheduling on $m$ cores is bounded by:
\begin{equation}\label{equ:our_bound}
R\le \min \limits_{j \in [0, k]} \left\{ len(G)+\frac{vol(G)-\sum_{i=0}^{j} len(\mypath_i)}{m-j} \right\}
\end{equation}
\end{theorem}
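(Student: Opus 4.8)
The plan is to obtain Theorem \ref{thm:our_bound} as an immediate corollary of Lemma \ref{lem:dag_bound}, applied not only to the full list $(\mypath_i)_0^k$ but to each of its prefixes $(\mypath_i)_0^j$ for $j \in [0, k]$. The crucial observation is that truncation preserves everything Lemma \ref{lem:dag_bound} needs: each prefix $(\mypath_i)_0^j$ is still a generalized path list (its members are generalized paths by Definition \ref{def:path_sequence}, and they remain pairwise disjoint because they form a subset of the disjoint collection $(\mypath_i)_0^k$, cf.\ Definition \ref{def:virtual_sequence}); the element $\mypath_0$ is unchanged and hence still the longest path of $G$; and $j \le k \le m-1$, so the index constraint of Lemma \ref{lem:dag_bound} holds with $j$ in the role of $k$.

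First I would record this prefix observation explicitly. Then, fixing an arbitrary $j \in [0, k]$ and invoking Lemma \ref{lem:dag_bound} on the generalized path list $(\mypath_i)_0^j$, I get
\[
R \le len(G) + \frac{vol(G) - \sum_{i=0}^{j} len(\mypath_i)}{m-j}.
\]
Here $R$ denotes the same quantity in every instance --- the worst-case response time of $G$ under any work-conserving schedule on $m$ cores --- while only the right-hand side depends on $j$. Consequently the single number $R$ is a lower bound cannot exceed each of these $k+1$ right-hand sides, and therefore $R$ is bounded by their pointwise minimum over $j \in [0, k]$, which is precisely \eqref{equ:our_bound}.

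I do not expect a genuine obstacle here: all the technical weight has already been absorbed into Lemmas \ref{lem:sequence_bound}, \ref{lem:workload}, and \ref{lem:dag_bound}. The only points requiring (minimal) care are verifying that a prefix of a generalized path list inherits disjointness, the ``$\mypath_0$ is the longest path'' property, and the index bound, and then the trivial logical step that a value satisfying each inequality in a finite family satisfies the minimum of their right-hand sides. Neither step involves any computation, so the proof will be short.
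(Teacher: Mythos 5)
Your proposal is correct and matches the paper's own proof: the paper likewise derives Theorem \ref{thm:our_bound} by applying Lemma \ref{lem:dag_bound} to each prefix $(\mypath_i)_0^j$ for $j \in [0,k]$ and taking the minimum of the resulting bounds. Your explicit check that each prefix remains a valid generalized path list (disjointness, $\mypath_0$ still the longest path, $j \le m-1$) is left implicit in the paper but is exactly the right justification.
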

\begin{proof}
By Lemma \ref{lem:dag_bound}, we know that for each $j \in [0, k]$, $len(G)+\frac{vol(G)-\sum_{i=0}^{j} len(\mypath_i)}{m-j}$ is an upper bound of $R$. Therefore, the minimum of these bounds also upper-bounds $R$.
\end{proof}

The derivation procedure of the bound in Theorem \ref{thm:our_bound}  is based on unit DAGs.
However, the result of Theorem \ref{thm:our_bound} can be directly applied to the original DAG.
For a DAG $G$ and its corresponding unit DAG $G^u$, $vol(G)=vol(G^u)$; for a path $\mypath$ of $G$, and its corresponding path $\mypath^u$ of $G^u$, $len(\mypath)=len(\mypath^u)$.
For example, in Fig. \ref{fig:dag_example}, let $\mypath=(v_0, v_1)$; in Fig. \ref{fig:unit_dag}, the corresponding path is $\mypath^u=(v_0, v^1_1, v^2_1, v^3_1)$. Obviously, $len(\mypath)=len(\mypath^u)=4$.
Therefore, the result of Theorem \ref{thm:our_bound} directly applies to the original DAG $G$.

Note that by our analysis, the bound in (\ref{equ:our_bound}) is still safe when some vertices execute for less than their WCETs.

\begin{corollary}\label{cor:bound_domination}
The bound in (\ref{equ:our_bound}) dominates Graham's bound.
\end{corollary}
\begin{proof}
$len(\mypath_0)=len(G)$.
Let $j=0$ in (\ref{equ:our_bound}), we have $len(G)+\frac{vol(G)-len(G)}{m}$, which is Graham's bound. Therefore,
the bound in (\ref{equ:our_bound}) is less than or equal to Graham's bound.
\end{proof}

The derivation of our bound only depends on the work-conserving property. Same as Graham's bound, our bound is valid for any work-conserving scheduling algorithm, regardless of whether it is preemptive or non-preemptive, priority-based or other rule-based.

\section{Computing Generalized Path List}
\label{sec:path_sequence}
To compute the bound  (\ref{equ:our_bound}) in Theorem \ref{thm:our_bound}, a generalized path list $(\mypath_i)_0^k$ should be given in advance. This section presents how to compute the generalized path list.
Note that any generalized path list is qualified to compute the bound in (\ref{equ:our_bound}). The target here is to find a generalized path list to make the bound as small as possible.
As shown in (\ref{equ:our_bound}), $(\mypath_i)_0^k$ with larger volume and smaller $k$ leads to a smaller bound, which will guide our algorithm for computing the generalized path list.
Note that we work on the original DAG, instead of the unit DAG, to compute the generalized path list.
Again, unit DAG is only an auxiliary concept used in the proofs, and is not needed when computing the proposed response time bound.

\begin{definition}[Residue Graph]\label{def:residue_graph}
Given a generalized path $\mypath$ of graph $G = (V, E)$, the residue graph $res(G, \mypath)= (V, E)$ is defined as:
\begin{itemize}
  \item if $v \in \mypath$, the WCET of $v$ in $res(G, \mypath)$ is $0$;
  \item if $v \in V\setminus \mypath$, the WCET of $v$ in $res(G, \mypath)$ is $c(v)$.
\end{itemize}
\end{definition}

The residue graph $res(G, \mypath)$ has the same vertex set and edge set as $G$. However, the WCETs of vertices in $res(G, \mypath)$ and $G$ are different: the WCETs of vertices in path $\mypath$ are set to 0 in $res(G, \mypath)$.
With the concept of residue graph, the generalized path list is computed by Algorithm \ref{alg:sequence_computation}.
The input of Algorithm \ref{alg:sequence_computation} $G=(V, E)$ is a DAG (\emph{not} a unit DAG).
In Line 4, vertices with zero WCET are removed from  $\mypath_i$, which ensures that there are no common vertices among different $\mypath_i$.
The $(\mypath_i)_0^{\bar{k}}$ computed by Algorithm \ref{alg:sequence_computation} is a generalized path list with $\mypath_0$ being the longest path of $G$.

\begin{algorithm}[t]
    \caption{Computing Generalized Path List}\label{alg:sequence_computation}
    \DontPrintSemicolon
    \Input{$G = (V, E)$}
    \Output{$(\mypath_i)_0^{\bar{k}}$}
     $G' \leftarrow G$; $i \leftarrow 0$\\
     \While{$vol(G') \neq 0$}{
        $\mypath_i \leftarrow$ the longest path of $G'$\\
        $\mypath_i \leftarrow \mypath_i \setminus \{v \in \mypath_i| c(v)\ \mathrm{of}\ G'\ \mathrm{is}\ 0 \}$\\
        $G' \leftarrow res(G', \mypath_i)$; $i \leftarrow i+1$\\
     }
\end{algorithm}

\textbf{Complexity.}
In Algorithm \ref{alg:sequence_computation}, the while-loop can execute no more than $|V|$ times. For pseudo-codes in Line 2, 4, and 5, the time complexity is $O(V)$; In Line 3, the time complexity of computing the longest path in a DAG is $O(V+E)$. In summary, the time complexity of Algorithm \ref{alg:sequence_computation} is $O(V(V+E))$.

The $\bar{k}$ of $(\mypath_i)_0^{\bar{k}}$ returned by Algorithm \ref{alg:sequence_computation} only depends on the parameters of $G$. The $k$ of $(\mypath_i)_0^k$ in Theorem \ref{thm:our_bound} relates to both the DAG and the core number $m$.
Therefore, to compute the bound in (\ref{equ:our_bound}), we let $k \in [0, \min(\bar{k}, m-1)]$.
In theory, any $k \in [0, \min(\bar{k}, m-1)]$ is valid to compute (\ref{equ:our_bound}).
In reality, it can be easily seen that the bound in (\ref{equ:our_bound}) with a larger $k$ is no larger than that of a smaller $k$ with respect to the same generalized path list.
Therefore, we simply use $k=\min(\bar{k}, m-1)$ to compute (\ref{equ:our_bound}).

Note that in Algorithm \ref{alg:sequence_computation}, the loop will continue until $vol(G')=0$, which means that for the computed generalized path list, the following is true: $vol(G)=\sum_{i=0}^{\bar{k}} len(\mypath_i)$. This fact will be used in the proof of Theorem \ref{thm:new_core}.

\begin{example}\label{exp:path_sequence}
For the DAG in Fig. \ref{fig:dag_example}, the generalized path list computed by Algorithm \ref{alg:sequence_computation} is $\mypath_0=(v_0, v_1, v_4, v_5)$, $\mypath_1=(v_3)$, and $\mypath_2=(v_2)$,
whose lengths are $6$, $3$ and $1$, respectively.
If $m=2$, the computed bound is $6+(10-6-3)/(2-1)=7$, smaller than Graham's bound, which is $8$.
\end{example}


\section{Extension to Multi-Task Systems}
\label{sec:extension}

In this section, we extend our result to the scheduling of multi-DAG task systems.
We model a DAG $G$ as a tuple $\mathcal{P}$:
\begin{equation}\label{equ:new_model}
\mathcal{P} \coloneqq \langle C, (L_i)_0^{\bar{k}} \rangle
\end{equation}
where $C \coloneqq vol(G)$ is the volume of $G$, and $(L_i)_0^{\bar{k}}$
is a list of numbers, where each $L_i$ equals the length $len(\mypath_i)$
of a generalized path computed  by Algorithm \ref{alg:sequence_computation}.
We define $L \coloneqq L_0 =len(G)$ is the length of the longest path of $G$.
Using these new notations, (\ref{equ:our_bound}) is simplified to be
\begin{equation}\label{equ:new_bound}
R\le \min \limits_{j \in [0, k]} \left\{ L+\frac{C-\sum_{i=0}^{j} L_i}{m-j} \right\}
\end{equation}
where $k = \min(\bar{k}, m-1)$ as discussed in Section \ref{sec:path_sequence}.
Each sporadic parallel task is represented as $(\mathcal{P}, D, T)$ where $\mathcal{P}=\langle C, (L_i)_0^{\bar{k}} \rangle$; $D$ is the relative deadline and $L \le D$; $T$ is the period. We consider constrained deadline, i.e., $D \le T$.

We schedule the multi-DAG system by
the widely-used federated scheduling
approach \cite{li2014analysis}, which is simple to implement and has good guaranteed real-time performance.
In federated scheduling, each heavy task (tasks with $C \ge D$) is assigned and executes exclusively on $m$ cores under a work-conserving scheduler, where $m$ is computed by (\ref{equ:original_core}).
\begin{equation}\label{equ:original_core}
m = \left \lceil \frac{C-L}{D-L} \right \rceil
\end{equation}
The light tasks (tasks with $C < D$) are treated as sequential sporadic tasks and are scheduled on the remaining cores
by sequential multiprocessor scheduling algorithms such as global EDF \cite{baruah2007techniques} or partitioned EDF \cite{baruah2005partitioned}.

To apply our response time bound to federated scheduling, the only extra effort is to decide the number of cores to be allocated to each heavy task, i.e., the minimum $m$ so that the response time bound in (\ref{equ:new_bound}) is no larger than the deadline. This is essentially the same as applying Graham's bound in (\ref{equ:classic_bound}) to federated scheduling to support multi-DAG systems.

\begin{theorem}\label{thm:new_core}
A parallel real-time task
$(\mathcal{P}, D, T)$ where $\mathcal{P}=\langle C, (L_i)_0^{\bar{k}} \rangle$ and $C \ge D \ge L$ is schedulable on $m$ cores where
\begin{equation}\label{equ:new_core}
m = \min_{j \in [0, \bar{k}]} \{ m(j) \}
\end{equation}
\begin{equation*}
m(j) \coloneqq
\begin{cases}
\left\lceil \frac{ C-\sum_{i=0}^{j} L_i }{D-L} \right\rceil +j  & j < \bar{k} \land D>L \\
\bar{k}+1 & j = \bar{k}
\end{cases}
\end{equation*}
\end{theorem}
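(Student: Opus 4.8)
The plan is to show that for each $j \in [0, \bar k]$, the task is schedulable on $m(j)$ cores, and then conclude that it is schedulable on $m = \min_{j}\{m(j)\}$ cores, since more cores only help (a work-conserving schedule on more cores can only finish earlier; formally, if a task is schedulable on $m'$ cores then it is schedulable on any $m'' \ge m'$ — and this is the standard justification used when applying Graham's bound to federated scheduling). So I would fix $j$ and prove that $m = m(j)$ cores suffice, by invoking Lemma \ref{lem:dag_bound} with exactly that $j$: the response time is bounded by $R \le L + \frac{C - \sum_{i=0}^{j} L_i}{m - j}$ (using $len(G) = L = L_0$ and $len(\mypath_i) = L_i$), and I need this to be $\le D$.

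First, the case $j < \bar k \land D > L$. Here $m(j) = \lceil \frac{C - \sum_{i=0}^{j} L_i}{D - L}\rceil + j$, so $m(j) - j = \lceil \frac{C - \sum_{i=0}^{j} L_i}{D - L}\rceil \ge \frac{C - \sum_{i=0}^{j} L_i}{D - L}$. Note the numerator $C - \sum_{i=0}^{j} L_i \ge 0$ since $\sum_{i=0}^{\bar k} L_i = C$ (the remark at the end of Section \ref{sec:path_sequence}) and all $L_i \ge 0$. Rearranging the inequality $m(j) - j \ge \frac{C - \sum_{i=0}^{j}L_i}{D-L}$ (valid division since $D - L > 0$) gives $\frac{C - \sum_{i=0}^{j} L_i}{m(j) - j} \le D - L$, hence $L + \frac{C - \sum_{i=0}^{j} L_i}{m(j) - j} \le D$, which is the response-time bound from Lemma \ref{lem:dag_bound} evaluated at this $j$, so the task meets its deadline on $m(j)$ cores. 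I should also check $m(j) - j \ge 1$ so that Lemma \ref{lem:dag_bound}'s requirement $j \le m - 1$ holds — this follows because $C \ge D > L$ forces $C - \sum_{i=0}^{j} L_i > 0$ when $j < \bar k$ (if it were zero, the remaining paths $\mypath_{j+1}, \dots$ would all have length zero, contradicting $C > L \ge L_0 \ge \dots$; more simply, $C - L_0 \ge C - L > 0$ and $C - \sum_{i=0}^j L_i$ could still be zero for $j \ge 1$, so I'd argue instead that even if the numerator is $0$ the bound is $L \le D$ and any $m \ge j+1$ works, and $m(j) = 0 + j$ would need a separate tweak — I'll just take $\lceil 0 \rceil$ handled as giving $m(j)\ge j$, and note $m(j)\ge j+1$ whenever the numerator is positive, else the task is trivially schedulable sequentially-ish; this edge bookkeeping is the one place needing care).

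Second, the case $j = \bar k$: then $m(\bar k) = \bar k + 1$. Apply Lemma \ref{lem:dag_bound} with $j = \bar k$ and $m = \bar k + 1$, so $m - j = 1$, and using $\sum_{i=0}^{\bar k} L_i = C$ the bound becomes $R \le L + \frac{C - C}{1} = L \le D$. So the task is schedulable on $\bar k + 1$ cores regardless. Finally, combining: for every $j$ the task is schedulable on $m(j)$ cores, therefore on $m = \min_j m(j)$ cores. The main obstacle, as flagged, is not the algebra but the boundary bookkeeping: ensuring $m(j) - j \ge 1$ so that $j \le m(j) - 1$ as required by Lemma \ref{lem:dag_bound}, and confirming the "schedulable on fewer cores $\Rightarrow$ schedulable on more" monotonicity is legitimate here (it is the same reasoning that underlies using $\lceil\cdot\rceil$ in Equation \ref{equ:original_core} for classical federated scheduling), plus verifying the numerator $C - \sum_{i=0}^j L_i$ is nonnegative throughout, which rests on the identity $\sum_{i=0}^{\bar k} L_i = C$.
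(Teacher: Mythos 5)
Your proof follows essentially the same route as the paper's: fix $j$, instantiate the response-time bound of Lemma \ref{lem:dag_bound} at that $j$, and read off the core count from the requirement that the bound be at most $D$, with the case $j=\bar{k}$ handled via the identity $\sum_{i=0}^{\bar{k}} L_i = C$.

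The one loose end you flag --- whether $m(j)-j\ge 1$, i.e.\ whether $j\le m(j)-1$ as Lemma \ref{lem:dag_bound} requires --- is closed cleanly by a fact you almost state: Algorithm \ref{alg:sequence_computation} only extracts a path while $vol(G')\neq 0$, and whenever $vol(G')>0$ the longest path of $G'$ has strictly positive length (every positive-WCET vertex lies on some complete path), so each $L_i>0$ and hence $C-\sum_{i=0}^{j}L_i>0$ \emph{strictly} for every $j<\bar{k}$. This gives $\left\lceil \frac{C-\sum_{i=0}^{j}L_i}{D-L}\right\rceil\ge 1$ and $m(j)\ge j+1$ with no special-case tweak; your fallback reasoning about a zero numerator is unnecessary, and as written the claimed contradiction with $C>L$ does not actually go through (a zero numerator for some $j\ge 1$ would not by itself contradict $C>L$ --- it is ruled out only by how the algorithm terminates). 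Separately, the appeal to ``more cores only help'' is superfluous: $\min_j m(j)$ equals $m(j^*)$ for some $j^*$, and you have already shown schedulability on exactly $m(j^*)$ cores.
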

\begin{proof}
In (\ref{equ:new_bound}), we know $\forall j \in [0, \bar{k}]$ and $j \le m-1$, $L+\frac{C-\sum_{i=0}^{j} L_i}{m-j}$ is a bound on the response time of the task.
The computed core number $m$ should guarantee that the response time bound is no larger than the deadline $D$.
Therefore, $m$ should satisfy both (\ref{equ:new_core_mj}) and (\ref{equ:new_core_less}).
\begin{equation}\label{equ:new_core_mj}
m \ge j+1
\end{equation}
\begin{equation}\label{equ:new_core_less}
L+\frac{C-\sum_{i=0}^{j} L_i}{m-j} \le D
\end{equation}

Also note that by Algorithm \ref{alg:sequence_computation}, if $j < \bar{k}$, then $C > \sum_{i=0}^{j} L_i$; if $j = \bar{k}$, then $C = \sum_{i=0}^{j} L_i$.
There are two cases.

(1) $j < \bar{k} \land D > L$.
By (\ref{equ:new_core_less}), we have
$m \ge \frac{ C-\sum_{i=0}^{j} L_i }{D-L} +j$.
Since the core number is an integer and we want $m$ to be small as much as possible, we have
\begin{equation}\label{equ:new_core_j}
m =\left\lceil \frac{ C-\sum_{i=0}^{j} L_i }{D-L} \right\rceil +j
\end{equation}
As $j < \bar{k}$, $C > \sum_{i=0}^{j} L_i$, $m$ computed by (\ref{equ:new_core_j}) satisfies (\ref{equ:new_core_mj}).

(2) $j=\bar{k}$.
Since $j = \bar{k}$, by Algorithm \ref{alg:sequence_computation}, $C = \sum_{i=0}^{j} L_i$. Note that $D \ge L$. Therefore, (\ref{equ:new_core_less}) satisfies trivially.
By (\ref{equ:new_core_mj}), $m \ge j+1$. Also, we want $m$ to be small as much as possible, so in this case $m = j+1= \bar{k} +1$.

In summary, $\forall j \in [0, \bar{k}]$, $m(j)$ is a valid $m$ in the sense that (\ref{equ:new_core_mj}) and (\ref{equ:new_core_less}) are satisfied.
Therefore, if $m$ is the minimum of all $m(j)$, the response time bound is no larger than the deadline.
\end{proof}

Note that if $D=L$, by Case (2) of the above proof, a valid $m=\bar{k} +1$ can also be computed.
This result is deduced by our theory and is intuitive at the same time. Since the DAG task has $\bar{k} +1$ generalized paths and generalized paths are sequentially executed workloads in any execution sequences, if the allocated number of cores is $\bar{k} +1$,
these $\bar{k} +1$ generalized paths cannot interfere with each other at all.
Therefore, the response time of the DAG task will be no larger than the length of the longest path, so the deadline $D=L$ will not be missed.

\begin{corollary}\label{cor:test_domination}
The schedulability test of federated scheduling with our response time bound
in (\ref{equ:our_bound}) dominates the original schedulability test in \cite{li2014analysis} using Graham's bound.
\end{corollary}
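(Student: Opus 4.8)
The plan is to show that, for every heavy task, our core‑allocation rule (\ref{equ:new_core}) never requests more cores than the Graham‑based rule (\ref{equ:original_core}), and then to lift this per‑task comparison to the whole federated schedulability test by observing that the two tests differ \emph{only} in how many cores each heavy task reserves; the pool of remaining cores, and the way light tasks are scheduled on it, are identical.

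First I would compare the two allocations for a single heavy task $(\mathcal{P}, D, T)$ with $C \ge D \ge L$. Write $\bar m \coloneqq \lceil (C-L)/(D-L) \rceil$ for the Graham‑based allocation of (\ref{equ:original_core}), and let $m = \min_{j \in [0,\bar{k}]} m(j)$ be our allocation from Theorem~\ref{thm:new_core}. Since $m \le m(0)$, it suffices to bound $m(0)$. When $D > L$ and $\bar{k} \ge 1$, the $j=0$ term is $m(0) = \lceil (C - L_0)/(D-L) \rceil = \bar m$ because $L_0 = L$; hence $m \le \bar m$. This is essentially Corollary~\ref{cor:bound_domination} expressed at the level of core counts: since the bound (\ref{equ:new_bound}) is at most Graham's bound for every fixed core count, the smallest core count that makes (\ref{equ:new_bound}) meet the deadline is no larger than the one that makes Graham's bound meet it. When $D = L$, Graham's bound $L + (C-L)/m$ exceeds $D$ for every finite $m$ unless $C = L$, so the original test either forces $C = L$ — in which case Algorithm~\ref{alg:sequence_computation} yields $\bar{k}=0$, $C = L_0 = L = D$, and both tests allocate a single core — or declares the task unschedulable; our test, by Case~(2) in the proof of Theorem~\ref{thm:new_core}, always succeeds with $\bar{k}+1$ cores. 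In every case our request is no larger than $\bar m$ (and strictly smaller whenever $D=L$ and $C>L$, or whenever some $j\ge 1$ term improves on $m(0)$).

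Next I would assemble the system‑level argument. In federated scheduling the partition of tasks into heavy and light ones is the same under both tests; each heavy task receives a private pool of cores (sized by (\ref{equ:original_core}) or by (\ref{equ:new_core})) on which \emph{any} work‑conserving scheduler runs, with Lemma~\ref{lem:dag_bound} and Theorem~\ref{thm:our_bound} (resp.\ Graham's bound, Theorem~\ref{thm:classic_bound}) certifying its deadline; the light tasks are then scheduled on the remaining cores by the same sequential multiprocessor algorithm in both cases. By the per‑task comparison above, the total number of cores consumed by heavy tasks under our test is at most that under the original test, so at least as many cores remain available for the light tasks. Since a sequential multiprocessor schedulability test is monotone in the number of available cores, the light tasks pass under our test whenever they pass under the original one. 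Hence every task system accepted by federated scheduling with Graham's bound is also accepted with our bound, i.e., our test dominates.

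The main obstacle here is not a deep one: it is to pin down the notion of domination for the \emph{whole} test — the genuine content being that the two federated tests are identical except for the per‑heavy‑task core reservation, so a comparison of those reservations (which itself reduces to Corollary~\ref{cor:bound_domination}) propagates through the light‑task stage by monotonicity in the core count — and to handle cleanly the boundary cases $D = L$ and $\bar{k} = 0$, where the Graham formula (\ref{equ:original_core}) degenerates (division by zero, i.e.\ an effectively infinite core demand) while (\ref{equ:new_core}) remains finite.
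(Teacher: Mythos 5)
Your proof is correct and its core argument is exactly the paper's: setting $j=0$ in (\ref{equ:new_core}) recovers $\left\lceil \frac{C-L}{D-L}\right\rceil$ because $L_0=L$, so the minimum over $j$ can only be smaller. The extra material on the degenerate case $D=L$ and on lifting the per-task comparison to the full federated test by monotonicity in the remaining core count is sound added rigor that the paper leaves implicit, but it does not change the approach.
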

\begin{proof}
It suffices to show that the core number computed by (\ref{equ:new_core}) is no larger than that of (\ref{equ:original_core}), i.e.,
\begin{equation}\label{equ:core_domination}
\min_{j \in [0, \bar{k}]} \{ m(j) \} \le \left \lceil \frac{C-L}{D-L} \right \rceil
\end{equation}
Let $j=0$ in the LHS (left-hand side) of (\ref{equ:core_domination}). We have
$\left \lceil \frac{C-L}{D-L} \right \rceil$, which is the RHS of (\ref{equ:core_domination}).
The corollary holds.
\end{proof}

Besides being directly used in federated scheduling, Graham's bound also
contributes the idea behind its analysis techniques to the analysis
of other scheduling approaches, such as global scheduling \cite{melani2016schedulability}. Similarly, the analysis technique of our
new response time bound also has the potential to be applied to improve the
scheduling and analysis of other scheduling approaches such as global scheduling, which will be studied in our future work.

\section{Evaluation}
\label{sec:evaluation}
This section evaluates the performance of our proposed methods.
We conduct experiments of scheduling both single-DAG systems and multi-DAG systems using randomly generated task graphs.

\subsection{Evaluation of Single-DAG Systems}
\label{sec:evaluation_one}
\begin{figure}[t]
\centering
\subfloat[core number]{
    \includegraphics[width=0.49\linewidth]{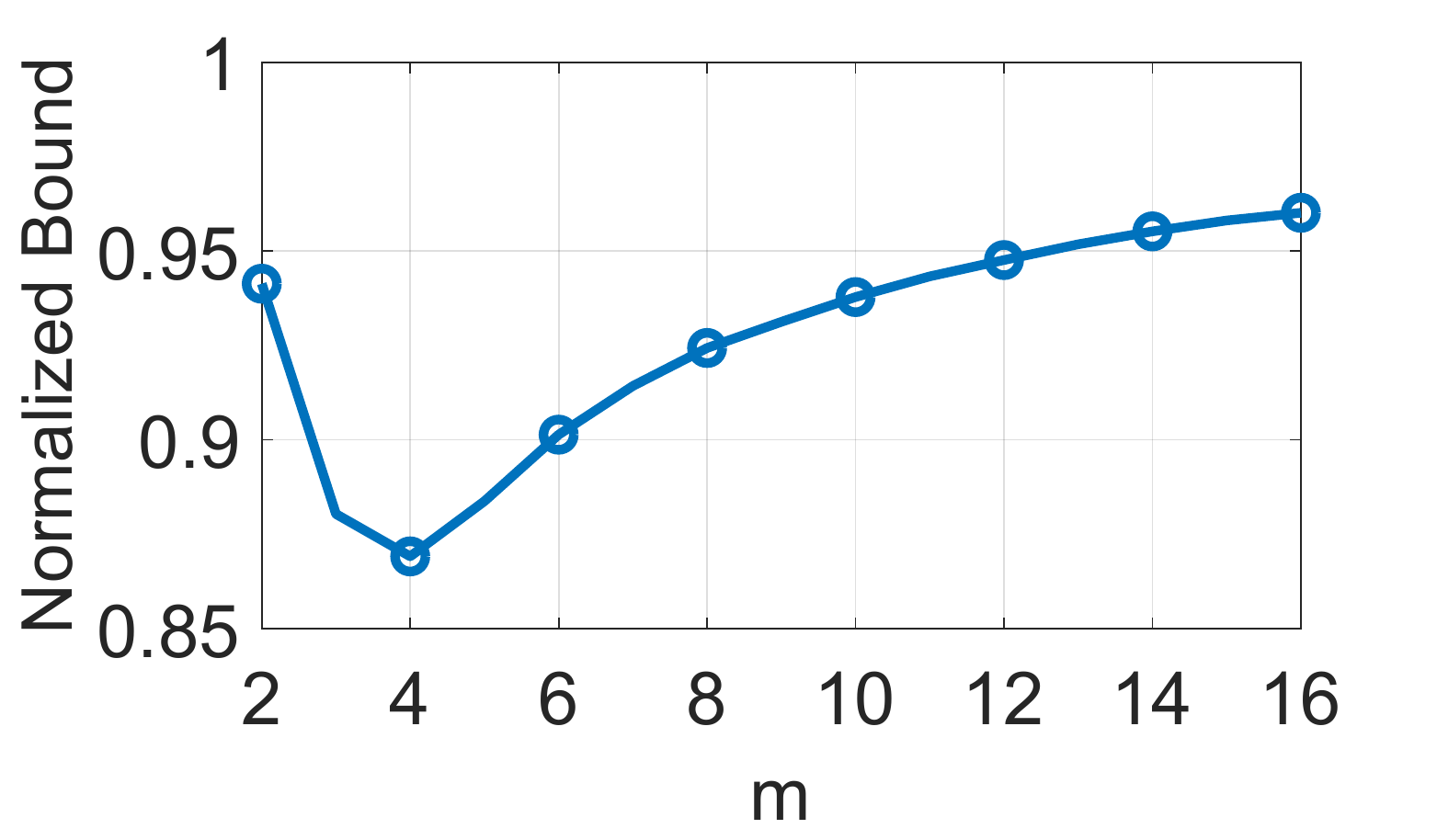}
    \label{fig:m}
}
\subfloat[parallelism factor]{
    \includegraphics[width=0.49\linewidth]{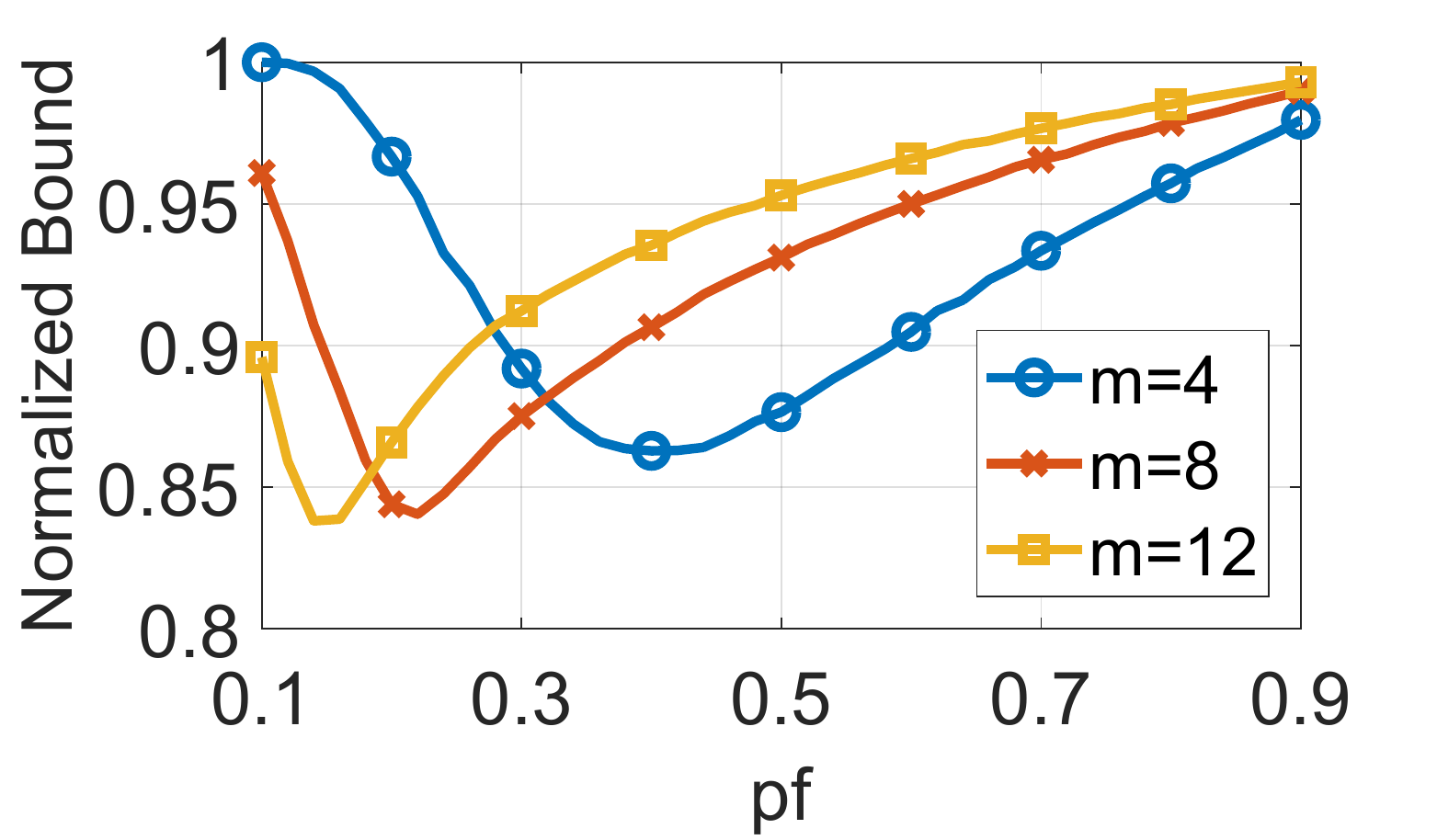}
    \label{fig:pf}
}
\hfil
\subfloat[vertex number]{
    \includegraphics[width=0.49\linewidth]{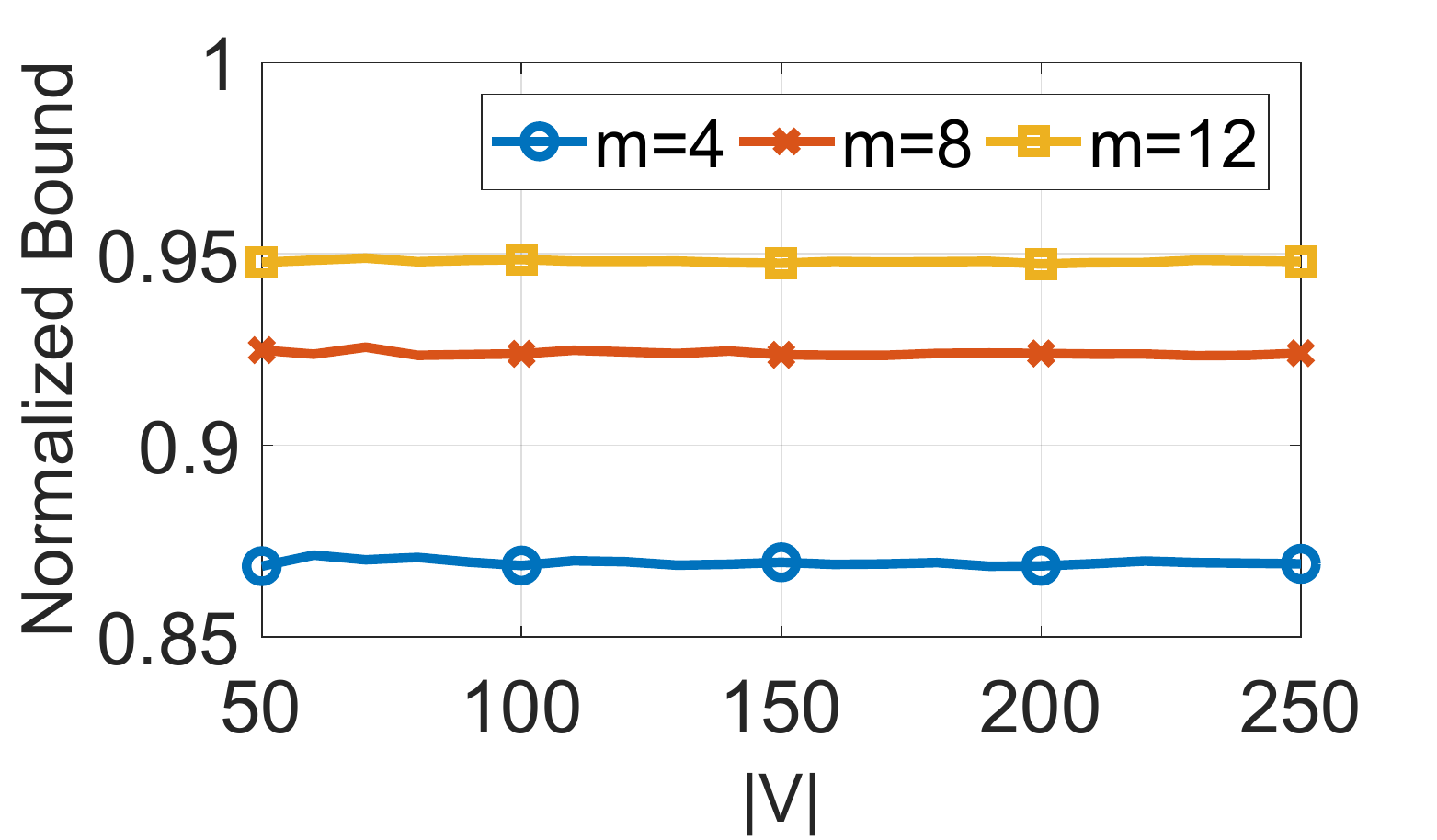}
    \label{fig:v}
}
\subfloat[deadline]{
    \includegraphics[width=0.49\linewidth]{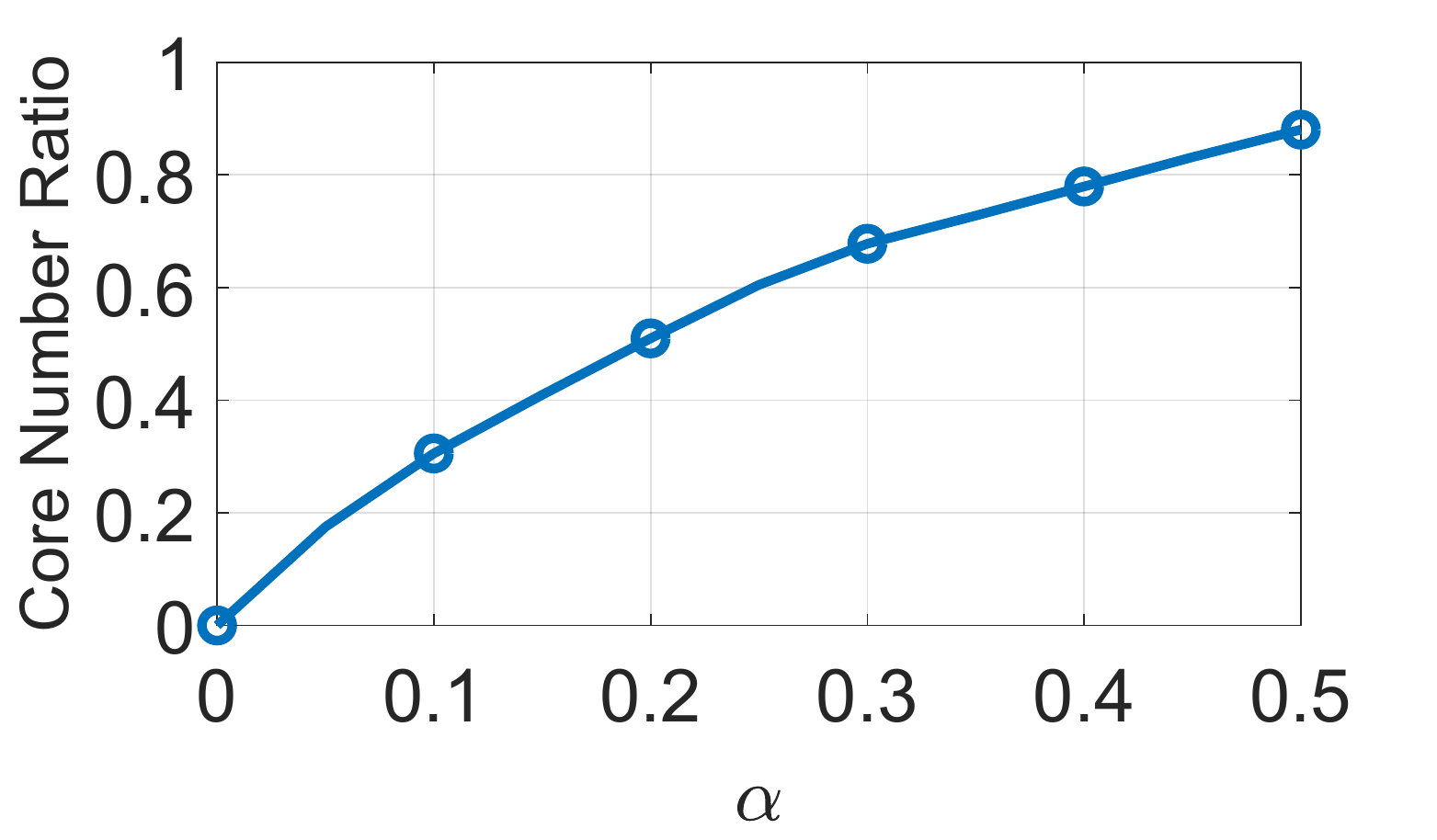}
    \label{fig:tm}
}
\hfil
\subfloat[parallelism factor]{
    \includegraphics[width=0.49\linewidth]{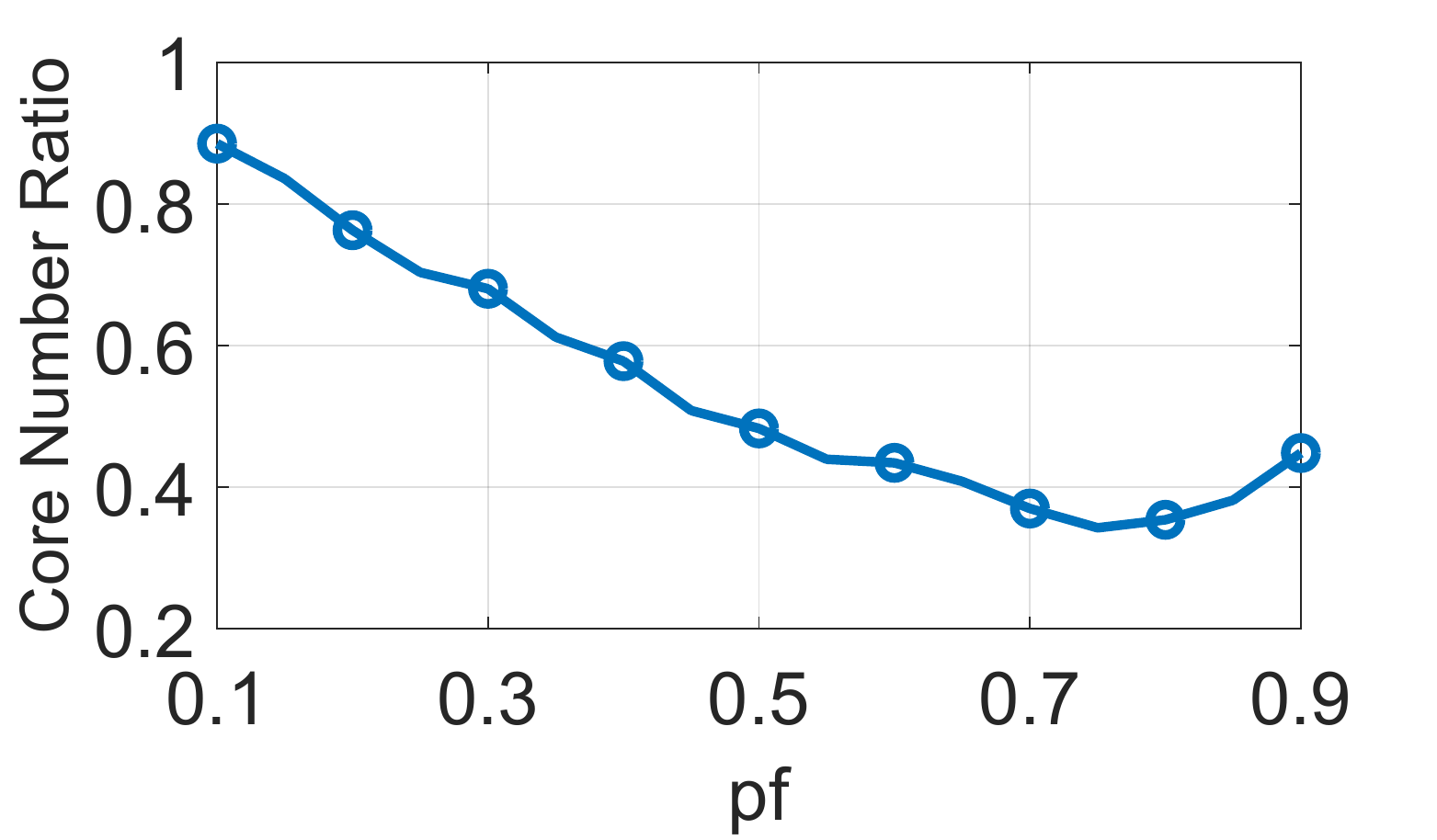}
    \label{fig:pm}
}
\subfloat[vertex number]{
    \includegraphics[width=0.49\linewidth]{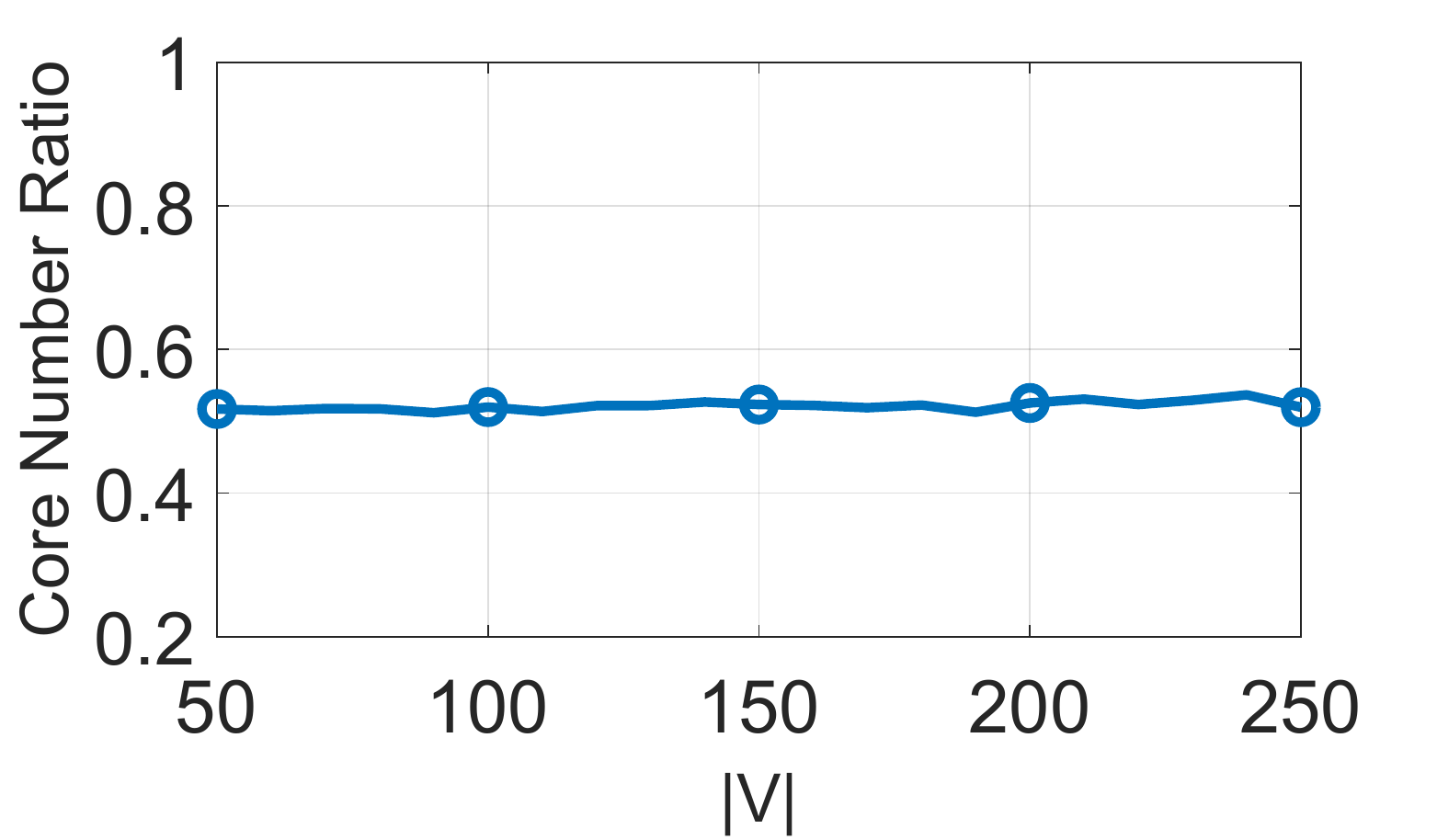}
    \label{fig:vm}
}
\caption{Evaluation of one DAG. Fig. \ref{fig:m}-\ref{fig:v} evaluate response time bound; Fig. \ref{fig:tm}-\ref{fig:vm} evaluate the core number required by a DAG task.}
\label{fig:evaluation_one}
\end{figure}

\begin{figure}[t]
\centering
\subfloat[core number as x-axis; under the same setting as Fig. \ref{fig:m}]{
    \includegraphics[width=0.78\linewidth]{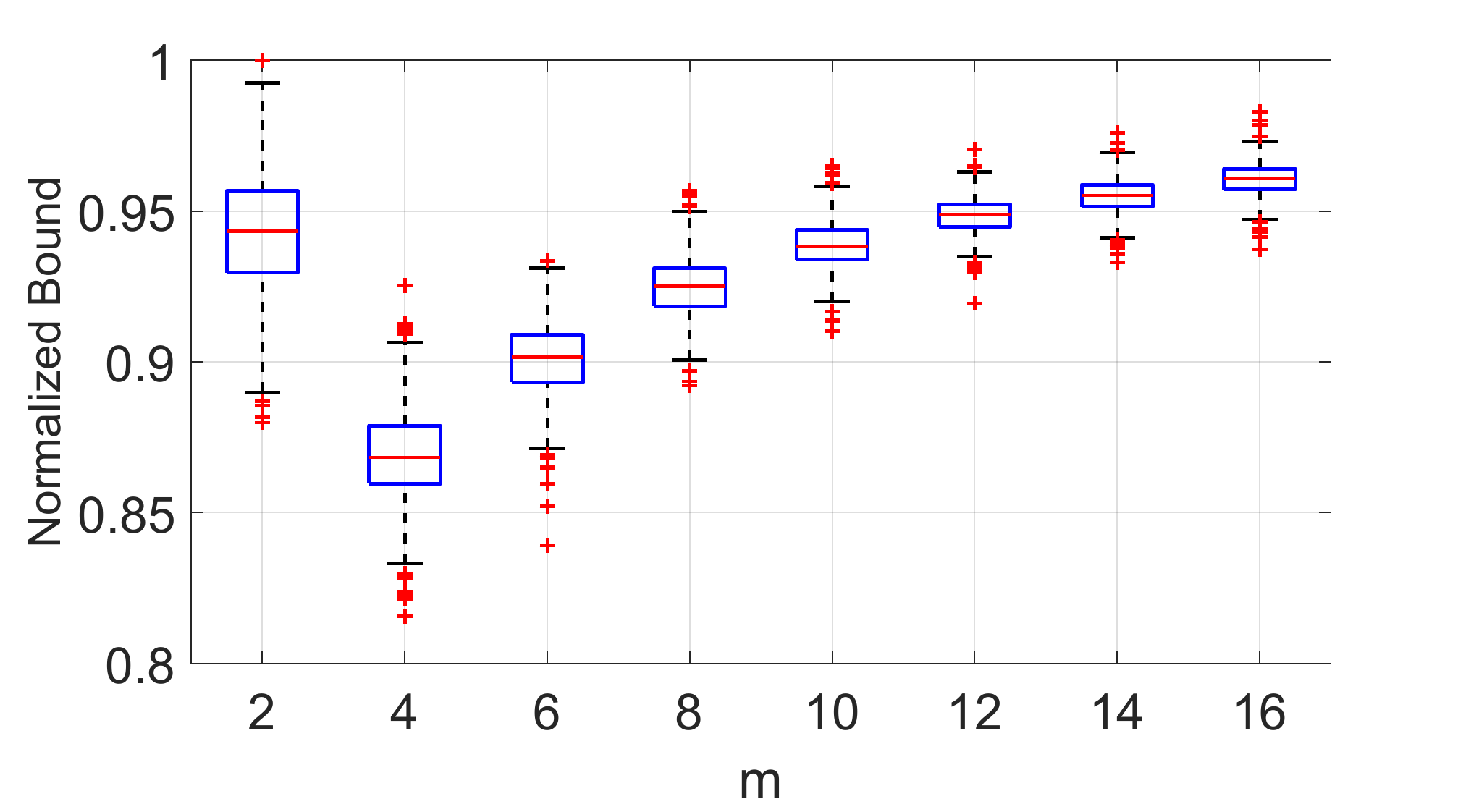}
    \label{fig:m_box}
}
\hfil
\subfloat[deadline as x-axis; under the same setting as Fig. \ref{fig:tm}]{
    \qquad
    \includegraphics[width=0.56\linewidth]{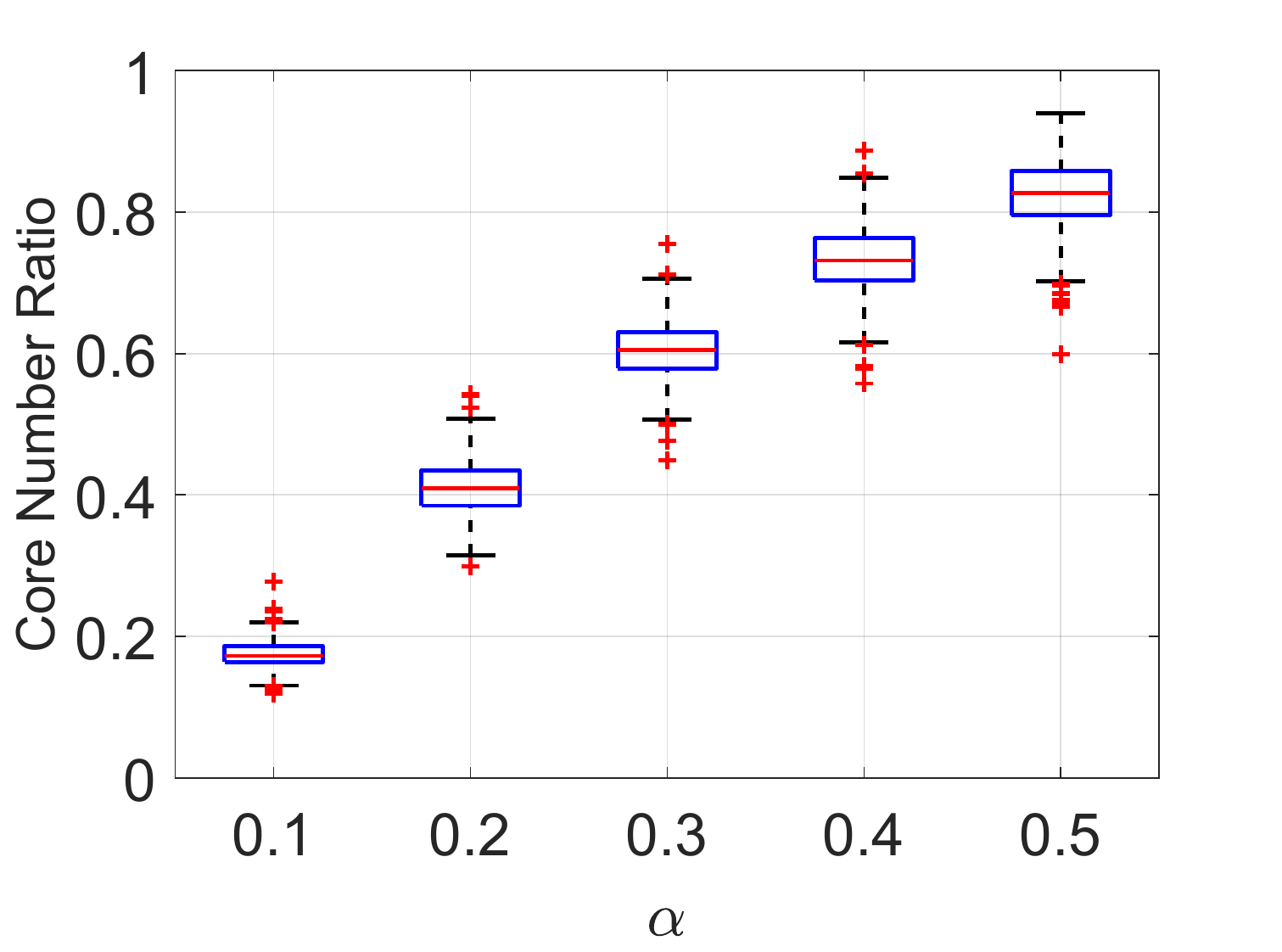}
    \label{fig:tm_box}
    \qquad
}
\caption{Performance variation regarding normalized bound and core number ratio.}
\label{fig:evaluation_box}
\end{figure}

This subsection evaluates the performance of our new bound with a single DAG task, compared to Graham's bound in (\ref{thm:classic_bound}) which, as stated in Section \ref{sec:related_work}, is the only result under the same setting as in this work.
Other related results \cite{he2019intra, zhao2020dag, he2021response, jiang2017semi, han2019response, voudouris2021bounding, melani2015response, sun2019calculating, sun2021calculating} all degrade to Graham's bound when considering a parallel real-time task modeled as a DAG under work-conserving scheduling on an identical multi-core platform.

\textbf{Task Generation.}
The DAG tasks are generated using the Erd\"os-R\'enyi method \cite{cordeiro2010random}, where the number of vertices $|V|$ is randomly chosen in a specified range. For each pair of vertices, it generates a random value in $[0, 1]$ and adds an edge to the graph if the generated value is less than a predefined \emph{parallelism factor} $\mathit{pf}$. The larger $\mathit{pf}$, the more sequential the graph is.
The period $T$ (which equals $D$ in the experiment) is computed by $L+\alpha(C-L)$, where $L$ is the length of the longest path, $C$ is the volume of the DAG and $\alpha$ is a parameter.
By (\ref{equ:original_core}), the number of cores required by a task is at most $\left \lceil \frac{1}{\alpha} \right \rceil$.
We consider $\alpha$ in [0, 0.5] to let heavy tasks require at least two cores.
The default settings are as follows.
The WCETs of vertices $c(v)$, the parallelism factor $\mathit{pf}$, the vertex number $|V|$ and $\alpha$ are randomly chosen in $[50, 100]$, $[0.1, 0.9]$, $[50, 250]$ and $[0, 0.5]$, respectively.
For each configuration (i.e., each data point in the figures), we randomly generate 5000 DAG tasks to compute the average value.

\textbf{Evaluation Using Normalized Bound.}
In this experiment, we use the normalized bound (i.e., the ratio between our bound and Graham's bound) as the metric for comparison. The smaller normalized bound, the large improvement our bound has.
The results are in Fig. \ref{fig:m}-\ref{fig:v}.
Fig. \ref{fig:m} shows the average normalized bound by changing the number of cores and Fig. \ref{fig:m_box} shows the variation of the normalized bound by using the box plot\footnote{
In a box plot, the middle line of the box indicates the median of data. The bottom and top edges of the box represent the 25th and 75th percentiles, respectively. The whiskers extending from the box show the range of data. And the outliers are plotted individually using the '+' symbol.
} under the same setting as Fig. \ref{fig:m}.
The improvement of our bound is up to 13.1\% with $m=4$ compared to Graham's bound.
As the core number becomes smaller and larger, our bound is closer to Graham's bound. This is because,
when the core number is small, both bounds approach $vol(G)$; when the core number becomes larger, both bounds approach $len(G)$.
Fig. \ref{fig:pf} shows the results by changing the parallelism factor.
The improvement is up to 16.2\% with $m=12$ and $\mathit{pf}=0.14$.
When $\mathit{pf}$ is small, i.e., the graph has high parallelism, our bound is closer to Graham's bound. This is because, for graphs with high parallelism, it is difficult to find a generalized path list with large volume and small $k$.
As $\mathit{pf}$ becomes larger, the graph is more sequential, and our bound becomes closer to Graham's bound (both bounds eventually approach $len(G)$).
The results with changing vertex number are presented in Fig. \ref{fig:v}, which shows that our analysis is insensitive to the vertex number of the graph.
By the data in Fig. \ref{fig:v}, our method reduces the response time bound by 13.1\% for $m=4$ on average compared to Graham's bound.

\textbf{Evaluation Using Core Number Ratio.}
For a DAG task, the number of required cores to satisfy its deadline can be computed.
The core number ratio is the ratio between the core number computed by (\ref{equ:new_core}) and (\ref{equ:original_core}).
The smaller core number ratio, the better our performance is.
To precisely compare the core numbers computed by different methods, the ceiling operation in (\ref{equ:new_core}) and (\ref{equ:original_core}) is not used in this experiment.
The results are in Fig. \ref{fig:tm}-\ref{fig:vm}.
Fig. \ref{fig:tm} shows the average core number ratio by changing $\alpha$ and Fig. \ref{fig:tm_box} shows the variation of the core number ratio by using the box plot under the same setting as Fig. \ref{fig:tm}.
Different $\alpha$ means different deadlines.
When $\alpha$ approaches 0, the deadline $D$ approaches $L$; the core number computed in (\ref{equ:original_core}) approaches infinite; the core number ratio approaches 0.
When $\alpha$ increases, the deadline $D$ becomes larger and close to $C$; both computed core numbers approach 1, and the core number ratio approaches 1.
Fig. \ref{fig:pm} shows the results by changing the parallelism factor $\mathit{pf}$.
When $\mathit{pf}$ increases, the DAG becomes more sequential which means that the volume $C$ becomes close to $L$.
Since $\alpha$ is randomly chosen in [0, 0.5] for all $\mathit{pf}$, in general, the deadline $D$ becomes close to $L$, which means that the core number computed by (\ref{equ:original_core}) can become large drastically.
Therefore, the core number ratio becomes smaller.
However, when $\mathit{pf}$ becomes close to 1, the generated DAG becomes extremely sequential, and we are into the corner cases. In the extreme case with $\mathit{pf}=1$, which means the DAG is a sequential task, we have $C=L=D$, and the task requires one core to meet its deadline. This explains why there is a trend of increase (a trend that the core number ratio becomes close to 1) for $\mathit{pf}>0.8$ in Fig. \ref{fig:pm}.
The results with changing vertex number are reported in Fig. \ref{fig:vm}, which shows that our method can reduce the number of cores by 47.9\% on average.
In summary, this experiment indicates that our method can significantly reduce the required core number for a DAG task.

\subsection{Evaluation of Multi-DAG Systems}
\label{sec:evaluation_set}

\begin{figure}[t]
  \centering
  \includegraphics[width=0.63\linewidth]{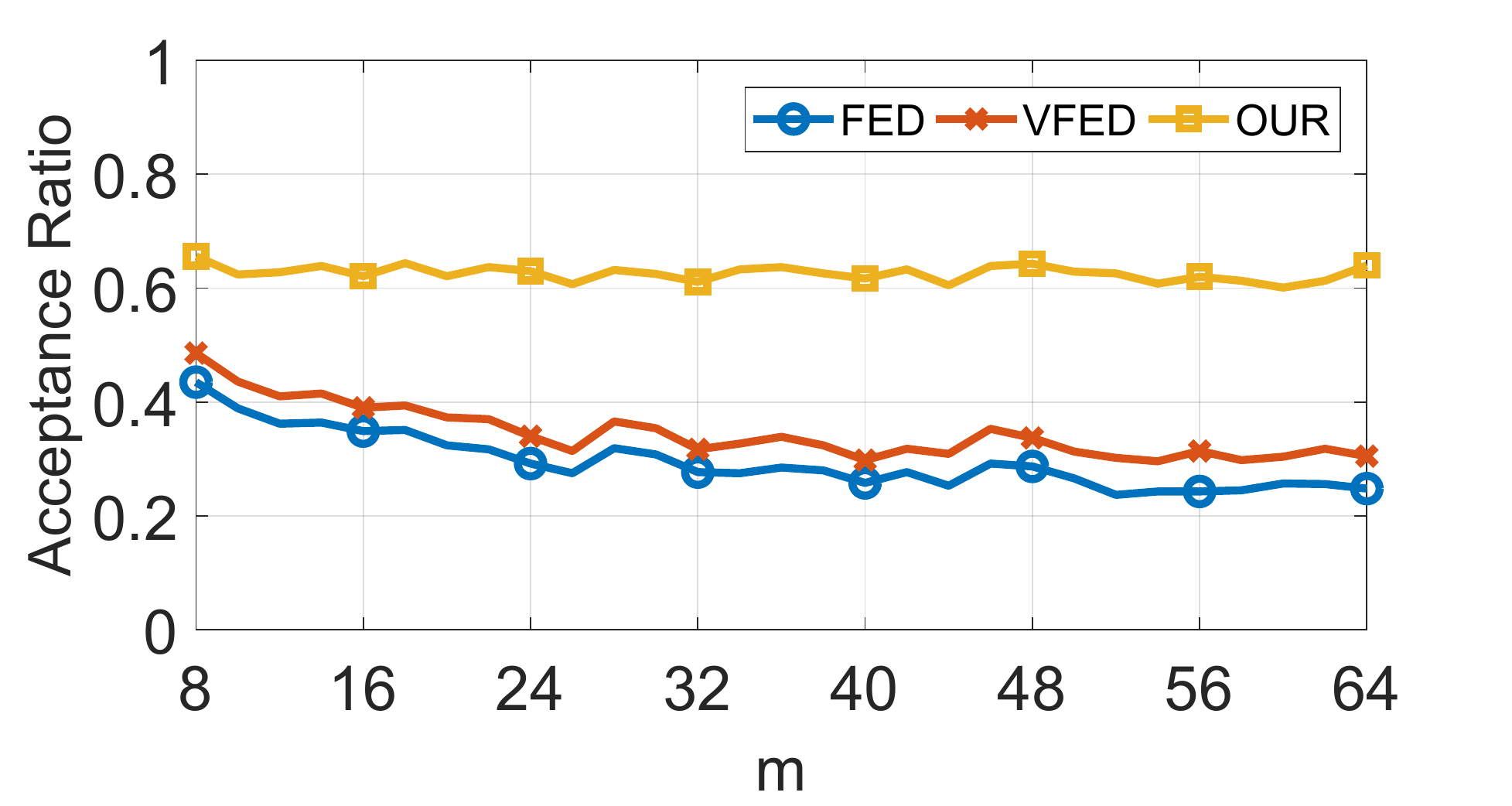}
  \caption{Evaluation of different numbers of cores.}
  \label{fig:g_m}
\end{figure}

\begin{figure}[t]
\centering
\subfloat[normalized utilization]{
    \includegraphics[width=0.49\linewidth]{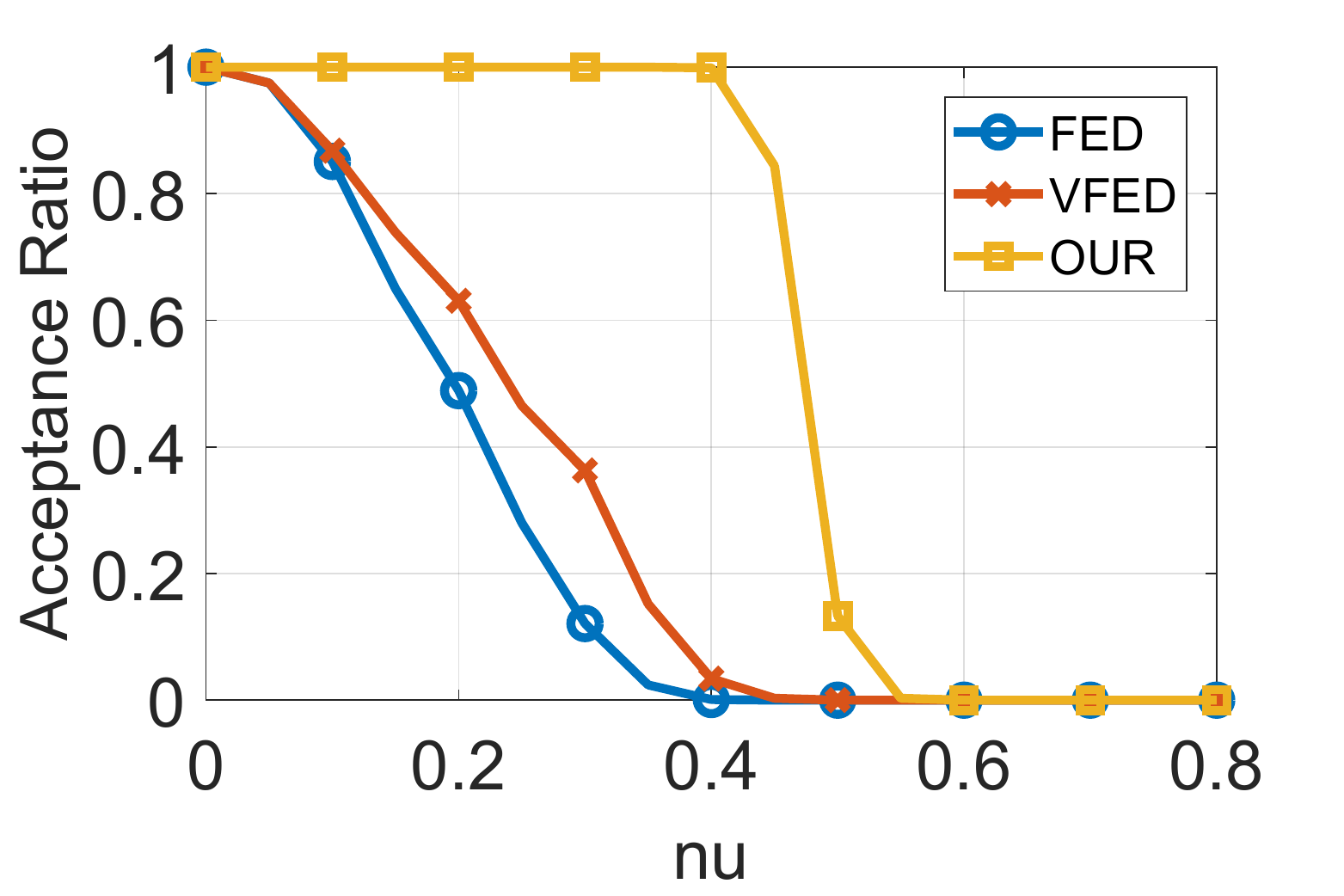}
    \label{fig:nu_set}
}
\subfloat[deadline]{
    \includegraphics[width=0.49\linewidth]{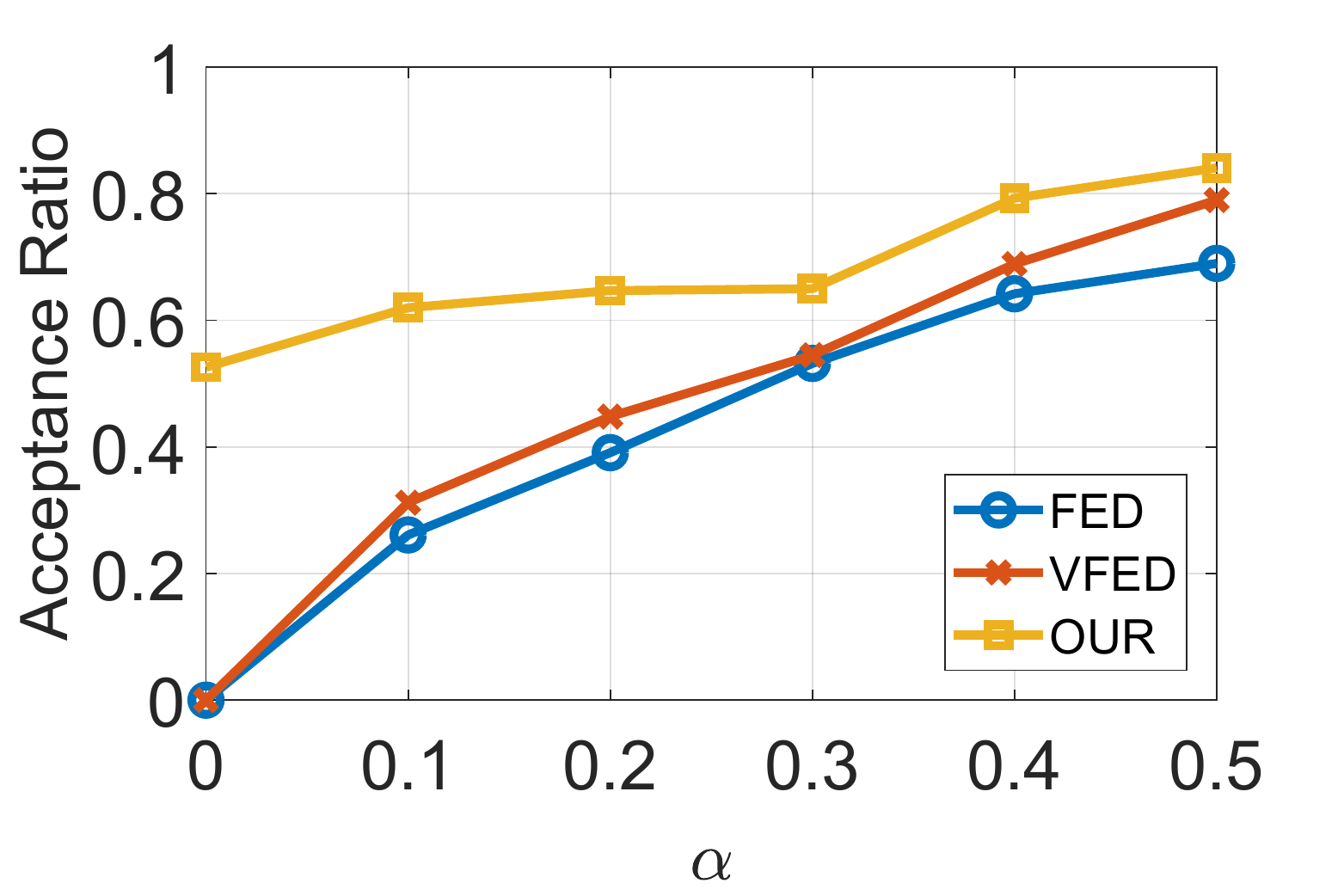}
    \label{fig:t_set}
}
\hfil
\subfloat[parallelism factor]{
    \includegraphics[width=0.49\linewidth]{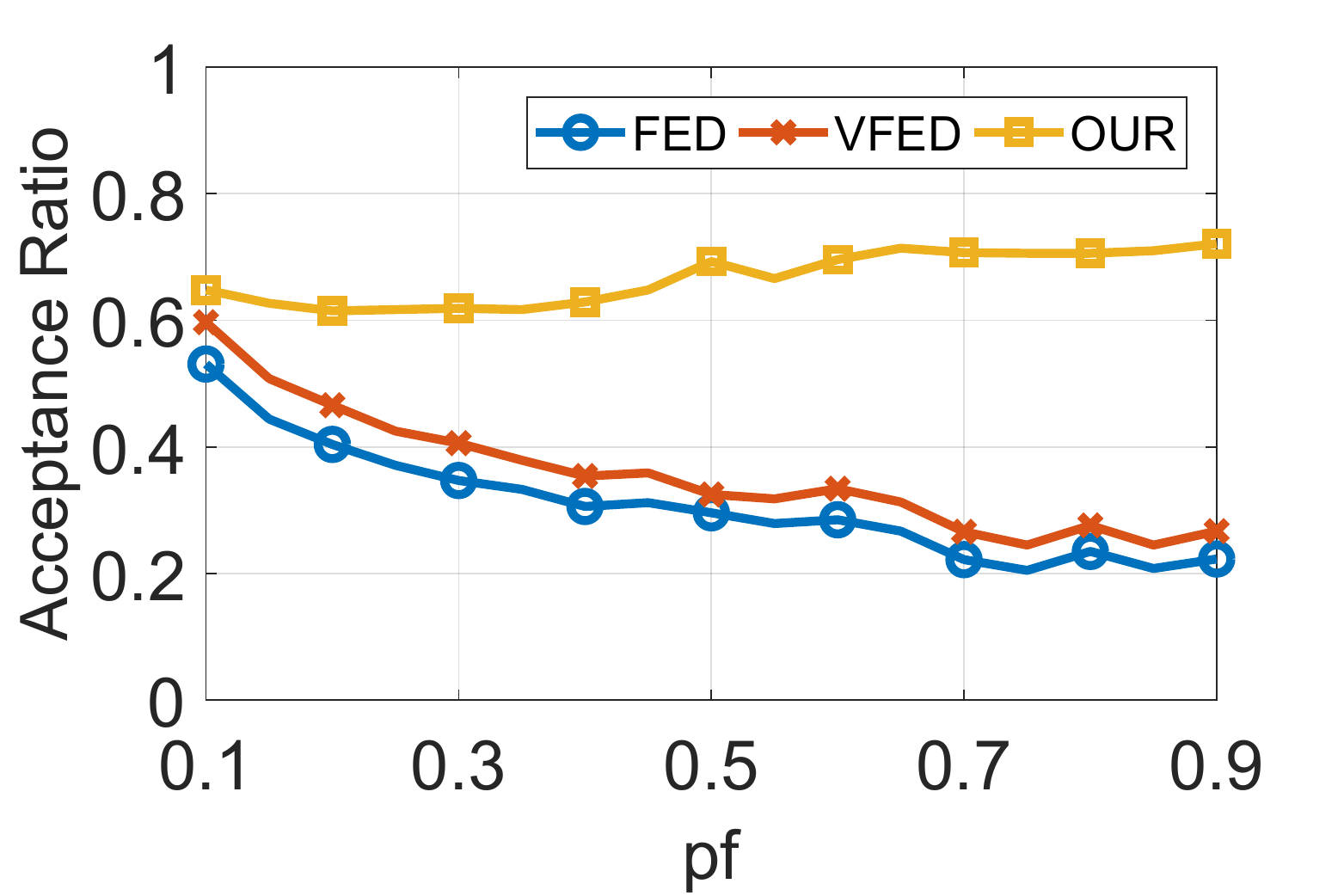}
    \label{fig:p_set}
}
\subfloat[vertex number]{
    \includegraphics[width=0.49\linewidth]{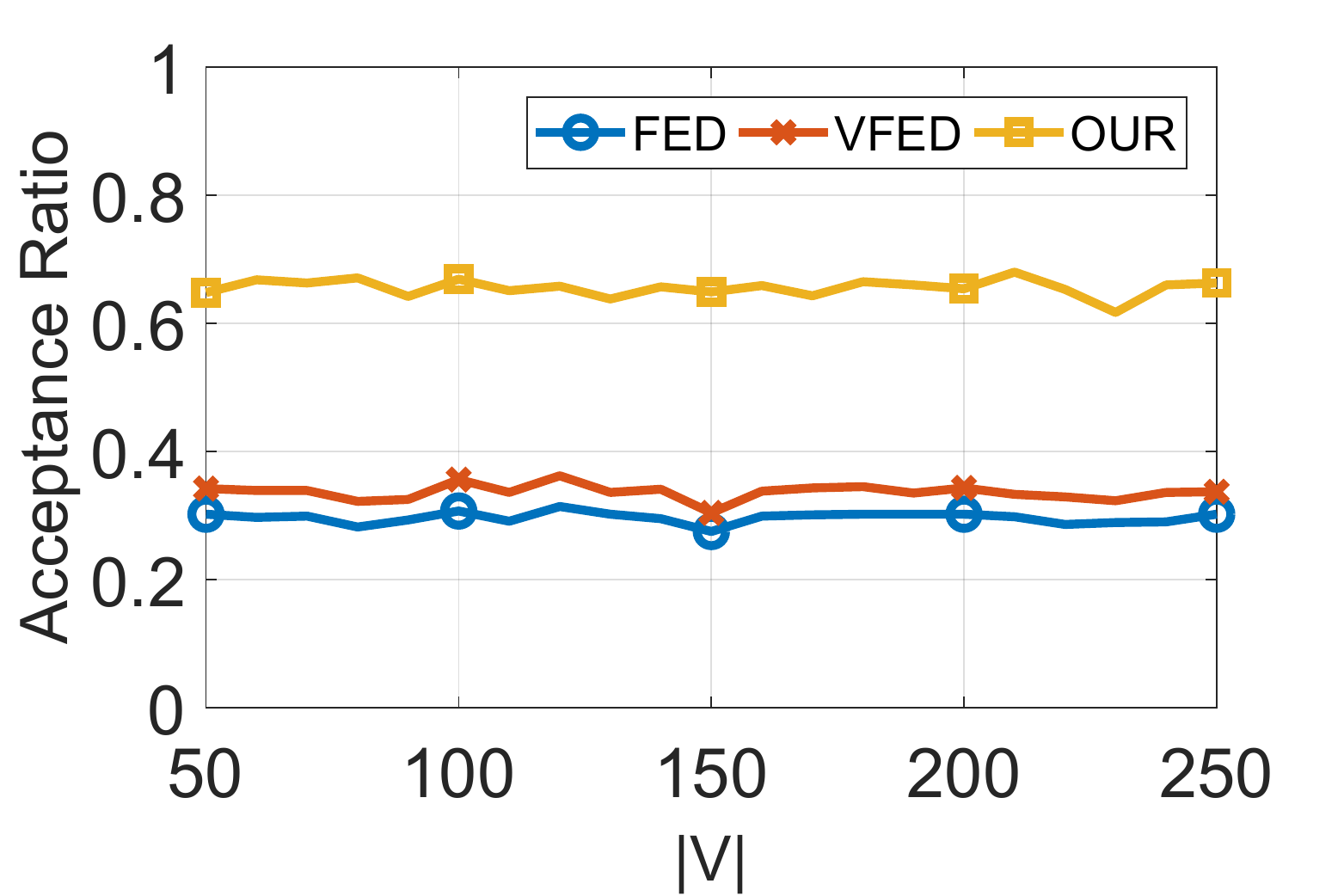}
    \label{fig:v_set}
}
\caption{Evaluation of multiple DAGs ($m$=32).}
\label{fig:evaluation_set}
\end{figure}

This subsection compares the following three methods for scheduling a DAG task set.
\begin{itemize}
  \item \textsf{OUR}. Our method presented in Section \ref{sec:extension}.
  \item \textsf{FED}. The original federated scheduling proposed in \cite{li2014analysis} based on Graham's bound.
  \item \textsf{VFED}. The virtually-federated scheduling in \cite{jiang2021virtually}, by
  adding servers on top of federated scheduling (based on Graham's bound) to reclaim
  unused processing capacity.
\end{itemize}

As shown in \cite{jiang2021virtually}, \textsf{VFED} has the best performance among all existing multi-DAG scheduling algorithms of different paradigms (federated, global, and partitioned), so we only include \textsf{VFED} in our comparison. Since \textsf{VFED} is based on Graham's bound, there is a potential to achieve even better schedulability by integrating our new bound with the idea of \textsf{VFED}, which will be studied in our future work.

\textbf{Task Set Generation.}
DAG tasks are generated by the same method as Section \ref{sec:evaluation_one} with $c(v)$, $\mathit{pf}$, $|V|$, $\alpha$ randomly chosen in [50, 100], [0.1, 0.9], [50, 250], [0, 0.5], respectively.
The number of cores $m$ is set to be 32 (but changing in Fig. \ref{fig:g_m}) and the normalized utilization $\mathit{nu}$ of task sets is randomly chosen in [0, 0.8].
To generate a task set with specific utilization, we randomly generate a DAG task and add it to the task set until the total utilization reaches the required value.
For each configuration (i.e., each data point in the figures), we randomly generate 5000 task sets to compute the average acceptance ratio.

We first evaluate the schedulability of task sets using acceptance ratio as the metric when scheduling on different core numbers. Fig. \ref{fig:g_m} presents the result. The larger acceptance ratio, the better the performance.
Our method can improve the system schedulability by 33.5\% for $m=64$ compared to \textsf{VFED}.
Since all three methods are of the federated scheduling paradigm, their performances are generally capable of scaling to the increase of the number of cores.
However, with the number of cores increasing, the performance of \textsf{FED} and \textsf{VFED} slightly decrease while our scheduling method does not.
This is because by utilizing the information of multiple long paths, our methods can better make use of the computing powers provided by the multi-core platform.
In the following experiments, we use the core number $m=32$ as a representative for evaluation.

We second compare the acceptance ratio under different settings for $m=32$ and the results are reported in Fig. \ref{fig:evaluation_set}.
In Fig. \ref{fig:nu_set}, compared to \textsf{VFED}, the improvement of acceptance ratio is up to 96.5\% with $\mathit{nu}=0.4$.
Consisting with Fig. \ref{fig:tm}-\ref{fig:vm}, Fig. \ref{fig:t_set}-\ref{fig:v_set} shows similar trends, and the reasons of these trends are the same with Fig. \ref{fig:tm}-\ref{fig:vm}.
Fig. \ref{fig:t_set} shows that our method consistently outperforms other methods.
In Fig. \ref{fig:p_set}, compared to \textsf{VFED}, the improvement of acceptance ratio is up to 46.5\% with $\mathit{pf}=0.85$.
Fig. \ref{fig:v_set} shows that compared to \textsf{VFED}, the average improvement is 31.9\% for $m$=32.
This experiment shows that the proposed method consistently outperforms the original federated scheduling and the state-of-the-art scheduling techniques for multi-DAG task systems by a considerable margin.

\section{Conclusion}
\label{sec:conclusion}
This paper developed a closed-form response time bound using the total workload and the lengths of
multiple relatively long paths of the DAG.
The new bound theoretically dominates and empirically outperforms Graham's bound.
We also extend our result to the scheduling of multi-DAG task systems, which theoretically dominates the original federated scheduling and outperforms the state-of-the-art by a considerable margin.
Currently, the computation of the proposed bound requires the DAG structure of real-time applications.
In the future, we plan to investigate the concept of virtual path further and use virtual paths, instead of paths, to compute the proposed bound.
Since virtual path may be obtained by measuring the execution of real-time applications, it is possible that the proposed bound can be computed without knowing the DAG structure.
Another direction is considering how to integrate the techniques in this paper with other scheduling approaches that do not directly use, but are based on the techniques of Graham's bound.

\section*{Acknowledgment}
This work is supported by the Research Grants Council of Hong Kong (GRF 11208522, 15206221) and the National Natural Science Foundation of China (NSFC 62102072).
The authors also thank the anonymous reviewers for their helpful comments.

\bibliographystyle{IEEEtran}
\bibliography{reference}

\end{document}